\setlist[itemize]{label=$\cdot$}
\setlist[itemize,1]{label=\textbullet}
\setlist[itemize,2]{label=--}
\setlist[itemize,3]{label=*}
\newcommand{\Contract}{\mathscr{C}}
\newcommand{\PoK}{\mathsf{PoK}}
\newcommand{\VPKE}{\mathsf{VPKE}}
\newcommand{\huaA}{\mathcal{A}}
\newcommand{\huaD}{\mathcal{D}}
\newcommand{\huaF}{\mathcal{F}}
\newcommand{\huaL}{\mathcal{L}}
\newcommand{\huaR}{\mathcal{R}}
\newcommand{\huaH}{\mathcal{H}}
\newcommand{\huaP}{\mathcal{P}}
\newcommand{\huaW}{\mathcal{W}}
\newcommand{\huaT}{\mathcal{T}}
\newcommand{\huaG}{\mathcal{G}}
\newcommand{\huaS}{\mathcal{S}}
\newcommand{\reward}{\mathsf{Quality}}
\newcommand{\KGen}{\mathsf{KeyGen}}
\newcommand{\Enc}{\mathsf{Enc}}
\newcommand{\Dec}{\mathsf{Dec}}
\newcommand{\hash}{\mathsf{h}}
\newcommand{\comm}{\mathsf{comm}}
\newcommand{\commit}{\mathsf{Commit}}
\newcommand{\open}{\mathsf{Open}}
\newtheorem{theorem}{Theorem}
\newtheorem{lemma}{Lemma}
\newcommand{\system}{$\mathsf{Dragoon}$}
\newcommand{\zebralancer}{$\mathsf{ZebraLancer}$}
\newcommand{\aux}{\mathsf{param}}
\newcommand{\PoQoEA}{\mathsf{PoQoEA}}
\newcommand{\Verify}{\mathsf{VerifyPKE}}
\newcommand{\Prove}{\mathsf{ProvePKE}}
\newcommand{\cProve}{\mathsf{ProveQuality}}
\newcommand{\cVerify}{\mathsf{VerifyQuality}}
\newcommand{\ignore}[1]{}
\newcommand{\yuan}[1]{\textcolor{red}{#1}}
\def\bitcoin{%
	\leavevmode
	\vtop{\offinterlineskip 
		\setbox0=\hbox{B}%
		\setbox2=\hbox to\wd0{\hfil\hskip-.03em
			\vrule height .3ex width .15ex\hskip .08em
			\vrule height .3ex width .15ex\hfil}
		\vbox{\copy2\box0}\box2}}
\begin{document}

\title{\system: Private Decentralized HITs Made Practical}

\author{\IEEEauthorblockN{Yuan Lu$^1$,  Qiang Tang$^{1,2}$, Guiling Wang$^1$}
\IEEEauthorblockA{$^1$New Jersey Institute of Technology, $^2$JDD-NJIT-ISCAS Joint Blockchain Lab\\
Email: \{yl768, qiang, gwang\}@njit.edu}
}


\maketitle

\renewcommand*{\thefootnote}{\arabic{footnote}}

\begin{abstract}
With the rapid popularity of blockchain, 
decentralized human intelligence tasks (HITs) are proposed  to crowdsource human knowledge without  relying on   vulnerable third-party platforms.
However,  the inherent limits of blockchain cause decentralized HITs to face a few ``new'' challenges. For example,
the confidentiality of solicited data turns out to be the {\em sine qua non}, though it was an   arguably dispensable  property in the centralized setting. 
To ensure the ``new'' requirement of data privacy,
existing decentralized HITs use generic zero-knowledge proof frameworks (e.g.,  SNARK), 
but   scarcely perform well in practice, due to the inherently expensive cost of generality.

We present a {\em practical} decentralized protocol for HITs, which  also achieves the {\em fairness} between requesters and workers. 
At the core of our contributions, we avoid the powerful yet highly-costly {\em generic} zk-proof tools    
and propose a special-purpose scheme to prove the quality of encrypted data.
%
%
By various   non-trivial statement reformations, 
proving the quality of encrypted data is   reduced to efficient verifiable decryption, 
thus making decentralized HITs practical.
Along the way, we rigorously define the ideal functionality of decentralized HITs and then  prove the security  due to the ideal/real paradigm.

We further instantiate our protocol to implement a system called \system{}\footnote{In German history, a dragoon was a lancer that  was particularly light and firearmed.  As an analog, \system{}   is a super lightweight  and highly robust system that enables the modern freelancers to enjoy     decentralized HITs.}, an instance of which 
is deployed atop Ethereum  to facilitate an image annotation task used by ImageNet. Our evaluations demonstrate its   practicality: the  on-chain handling cost of \system{} is even less than the handling fee of Amazon's Mechanical Turk for the same ImageNet HIT.

\end{abstract}

\begin{IEEEkeywords}
	crowdsourcing; human intelligence task; decentralized application; blockchain.
\end{IEEEkeywords}


\section{Introduction}\label{introduction}

Crowdsourcing empowers open collaborations over the Internet.
A remarkable case is to gather knowledge through \underline{h}uman \underline{i}ntelligence \underline{t}asks (HITs).
In HITs, a {\em requester} specifies a few questions which some {\em workers} can answer, 
such that the requester obtains answers and the workers get paid.
Since HITs were firstly minted in Amazon's MTurk \cite{MTurk}, they have been widely adopted, e.g., 
to build training datasets for machine learning  \cite{feifei,ng,video}.
In particular, ImageNet \cite{Imagenet}, an impactful deep learning  benchmark, 
was created through thousands of HITs  and laid stepping stones for the deep learning paradigm.

Nevertheless, 
both   academia and industry  \cite{ABI13, sdhc, mturkbotpanic,PWC15,mturkgolden,SZ15,imagenet-details,mturkbot,turkopticon,MCN16}
realize   the  broader adoption of HITs  is severely impeded in practice, as a result of the serious  security concerns
of {\em free-riding} and {\em false-reporting}:
(i) on the one hand, HITs suffer from low-quality answers, as misconducting workers or even bots would try to reap rewards without making real efforts \cite{sdhc, mturkbotpanic}; 
(ii) on the other hand, quite many requesters in the wild arbitrarily reject answers in order to collect data without paying  \cite{turkopticon}
through manipulating some real-world guidelines set forth for the requester to reject low-quality answers \cite{PWC15,mturkgolden,SZ15}. 

%



{Free-riding} and {false-reporting}  become the  major obstacles to the broader adoption of HITs 
  participated by mutually distrustful users \cite{turkopticon},
and therefore raise a basic requirement of {fairness} in HITs, 
namely, the requester   pays a worker,    iff  the worker  puts forth  a {qualified}  answer. Many studies \cite{sdhc,mturkbotpanic,PWC15,SZ15,mturkgolden,ABI13,imagenet-details}  characterize the purpose and then design proper incentives and payment policies for the needed fairness.
%
%
%

Notwithstanding, most traditional   solutions to fairness  \cite{sdhc,mturkbotpanic,PWC15,SZ15,mturkgolden,ABI13,imagenet-details}   fully trust  in a de facto {\em centralized} third-party platform  to enforce the payment policies for the basic {fairness} requirement in HITs.  
Unfortunately, putting trust in a single party turns out to be    {\em vulnerable} and {\em elusive} in practice, as a reflection of tremendous compromises, outages and misfeasance of real-world crowdsourcing platforms  \cite{MCN16,turkopticon, WazeDown}.
For instance, one of the most popular crowdsourcing platform, MTurk,  allows corrupted requesters  to reap data without paying  \cite{MCN16,turkopticon}.  
Worse still, all well-known weaknesses of  overtrusted third-parties, such as single-point failure \cite{WazeDown} and tremendous privacy leakage \cite{Apple}   remain as serious vulnerabilities in the special case of crowdsourcing. 
Let alone the   third-party  platforms   impose   expensive handling charges. For example, MTurk charges a handling fee up to  45\% of   overall incentives \cite{mturkfee}.


\smallskip
\noindent
{\bf New challenges in decentralization.}
Recognizing those drawbacks of {centralized} crowdsourcing, recent attempts  \cite{zebralancer,duan2019aggregating} 
initiated the {\em decentralized} crowdsourcing through the newly emerged 
blockchain\footnote{Remark that through the paper, blockchain refers to permissionless blockchain  (e.g. Ethereum mainnet) 
that is open to any Internet node.} technology.
Their aim is to ``simulate'' a virtual platform that is trustable to enforce the payment policies, thus removing the vulnerable   centralized platforms.
%

However, as shown in the seminal studies   on the blockchain \cite{KZZ16,KMS16},    decentralization through the blockchain  also
brings forth a few ``new'' security challenges, which can render the incentive mechanisms of HITs completely ineffective \cite{zebralancer}.
%

\underline{\smash{\em Privacy as a basic requirement}}. 
	In particular, due to the transparency of blockchain \cite{KZZ16,KMS16}, once some answers are submitted,
	any malicious worker can simply copy and re-submit them to earn rewards without making any real efforts,
	which immediately allows free-riding and cracks the basic fairness of HITs. 
	Namely, the transparent blockchain presents all workers a new option: 
	running a simple automated script  to ``copy-and-paste'' other answers submitted to the blockchain, 
	which was infeasible in previous centralized systems. 
	Having the new option of free-riding, rational workers might wait to copy  instead of conducting any  efforts. 
	Sorta ``tragedy of the commons'' could occur, and eventually no one would respond with independent 
	answers \cite{hardin1968tragedy,stewart2017crowdsourcing,huberman2009crowdsourcing,david2001tragedy}. 
	That said, the straightforwardly decentralized crowdsourcing could lose all   basic utilities  and 
	fail  to gather anything meaningful!
	
	Therefore, to make the decentralized crowdsourcing systems   function as desired,   privacy   becomes an  {\em indispensable} requirement instead of an advanced bonus property.

\smallskip
\noindent
{\bf State-of-the-art \& open problem}. 
To overcome blockchain's inherent limits,
prior art \cite{zebralancer} proposes the general outsource-then-prove framework for {\em private} decentralized HITs. 
%
%
It enables the requester to prove the  quality of  answers that are encrypted to her, without revealing the actual answers.
Such a proof becomes the crux to ensure privacy and simultaneously deters false-reporting and free-riding.
In addition, the feasibility challenge sprouts up as the blockchain needs to verify the proof,
which means the proof size and verification cost must be small enough to meet the    limited on-chain   resources.

For above reasons, prior  work relies on some  {\em generic} zero-knowledge proof (zk-proof) framework that is succinct in proof size and efficient for verifying, in particular SNARK\footnote{Remark that though the rise of Intel SGX becomes a seemingly enticing alternative of SNARK to go beyond many limits of blockchain by remote attestations \cite{Ekiden}, unfortunately, recent Foreshadow attacks \cite{foreshadow} allow the adversary to forge ``attestations'' by stealing the attestation key hardcoded in any SGX Enclave, which
	seriously challenges the already heavy  assumption of ``trusted'' hardware, and makes it even more illusive to trust SGX in practice.}  \cite{qap,BCG13,pinocchio}
%
 to reduce the on-chain verification cost.

Nonetheless, generic zk-proofs such as SNARK inevitably inherit    low performance  for the convenience of achieving generality, causing that prior private decentralized HITs suffer  from  an unbearable off-chain proving cost  and a still significant on-chain verifying expense: 
\begin{itemize}
	
	\item \underline{\smash{{\em Infeasible proving} (off-chain)}}. The proving    of   {\em generic} zk-proofs (e.g.,  SNARK) seems inherently  complex, due to the burdensome  NP-reduction for  generality.
	In   particular,  prior study \cite{zebralancer-full} reported 56 GB memory and 2 hours are needed to prove 
	whether an encrypted answer coincides with the majority of all encrypted submissions at a very small scale, e.g.,  at most eleven answers. Such a performance prevents the previous protocol from being usable by any normal requesters using   regular PCs.

	\item \underline{\smash{{\em Costly verification}  (on-chain)}}. 
	Existing blockchains (e.g. Ethereum)  are   feasible to verify
	only few types of {\em generic} zk-proofs such as  SNARK,
	whose verification need to compute a dozen of  expensive pairings over elliptic curve \cite{qap,BCG13,pinocchio}.
	So the on-chain verification of these zk-proofs  is not only computationally costly, but also financially expensive. Currently in Ethereum, 12 pairings already spend $\sim$500k gas   \cite{EIP1108}, and verifying a SNARK proof costs even more (about half US dollar).
\end{itemize}

Given the insufficiencies of the  state-of-the-art,  the following critical problem remains open:
	\begin{center}
	\noindent{\em  How to design a practical private decentralized HITs protocol for crowdsourcing human knowledge?}
		\end{center}

\noindent
{\bf Our contributions}. 
To answer the above unresolved problem,
we 
present
a {\em practical} private decentralized HITs  protocol
for the major tasks of crowdsourcing human knowledge.
%
In sum, our core technical contributions are three-fold:

\begin{itemize}

	\item 
	To achieve  practical private decentralized HITs, 
	we  explore	various non-trivial optimizations to avoid  the cumbersome generic-purpose zk-proof framework,
	and reduce the protocol    to the specific verifiable decryption.
	As such, we attain concrete improvements by orders of magnitude, regarding both the proving and verification: 
	\begin{itemize}
		\item For proving, our approach is {\em two orders of magnitude} better than generic zk-proof.\footnote{Generic zk-proof   refers   zk-SNARK in our context, since  the only generic zk-proof that can be feasibly supported by existing blockchains is zk-SNARK.} 
		For the same HIT, the proving in our protocol costs  50 MB memory and 10 msec, while the generic proof  costs 10 GB    and 2 min.
		\item For   verifying, our result  improves upon the generic solution by   {\em nearly an order of magnitude}. The on-chain cost of verifying a proof for the quality of an answer to 106 batched binary questions is reduced to  $\sim$180k gas in Ethereum  (much smaller than verifying SNARK proofs) and typically   few US cents.
	\end{itemize}

	%
	
	
	%

	\item 
	We further  implement  our    protocol 	to instantiate a {\em practical} private decentralized crowdsourcing system  \system{},
	the handling cost of which could be even less than the existing centralized platforms such as MTurk. 
	
	\system{}
	is launched   atop Ethereum to conduct a typical HIT adopted by  ImageNet  \cite{imagenet-details} to solicit large-scale image annotations.
	To handle the  task, 
	\system{}   attains an  on-chain (handling) cost   $\sim$\$2 US dollars   at the time of writing. 
	In comparison, for the  same   task, the handling fee of MTurk   is at least \$4 currently  \cite{mturkfee,mturkprice}.
	
	Our result provides an insight  that the on-chain {\em handling fee} (characterizing the users' financial expense) in the  decentralized setting  
	can   approximate or even   less than the handling fee charged by centralized   platforms. 
	This   indicates     the {\em de facto} users can   financially benefit  from decentralization,
	though it is not contradictory to the common belief \cite{Sedgwick} that    decentralization  is more expensive w.r.t. the overall computational cost of the system.	
	

	\item Along the way,
	we firstly formulate the ideal functionality of decentralized HITs. The rigorous security model clearly defines what a HIT shall be and allows us to use the simulation-based paradigm to prove security against  subtle adversaries in the blockchain.
	
	In contrast,    existing decentralized HITs \cite{zebralancer,zebralancer-full} 
	have quite different property-based definitions on ``securities'', which at least makes the lack of  well-defined benchmark to compare them. Even worse, many of them are   ``flawed'', as failing  to capture all respects of the subtle  adversary in the  blockchain;
	for example, they allow  the corrupted requester to reap data without paying, if being given the standard ability of adversarially re-ordering message deliveries, 
	while our approach     precisely defines the security requirement against this  subtle attack.
	
	%

\end{itemize}

\noindent
{\bf Challenge \& our techniques}. 
The major challenge of making {\em private}  decentralized HITs practical is 
that the blockchain must learn the quality of some encrypted answers,
namely, to obtain some properties of what a few ciphertext are encrypting.
The state-of-the-art   \cite{zebralancer,zebralancer-full} proposed to reduce the problem  to generic zk-proofs,
by observing the requester can decrypt the answers, and then prove the quality of answers to the blockchain.
But this generic approach  incurs impractical expenses inherently, because  of the underlying heavyweight NP-reduction  for generality.

%
 
To conquer the above challenge,
we follow a different path that deviates from generic   zk-proofs to explore a concretely efficient solution.
At the core of our {\em private}  decentralized HITs protocol, we design a special-purpose non-interactive proof scheme to efficiently attest the quality of encrypted answers,
which removes heavyweight general-purpose zk-proofs and then avoids the inefficiency of generality.


\begin{figure}[!htp]
	\centering
	\includegraphics[width=9cm]{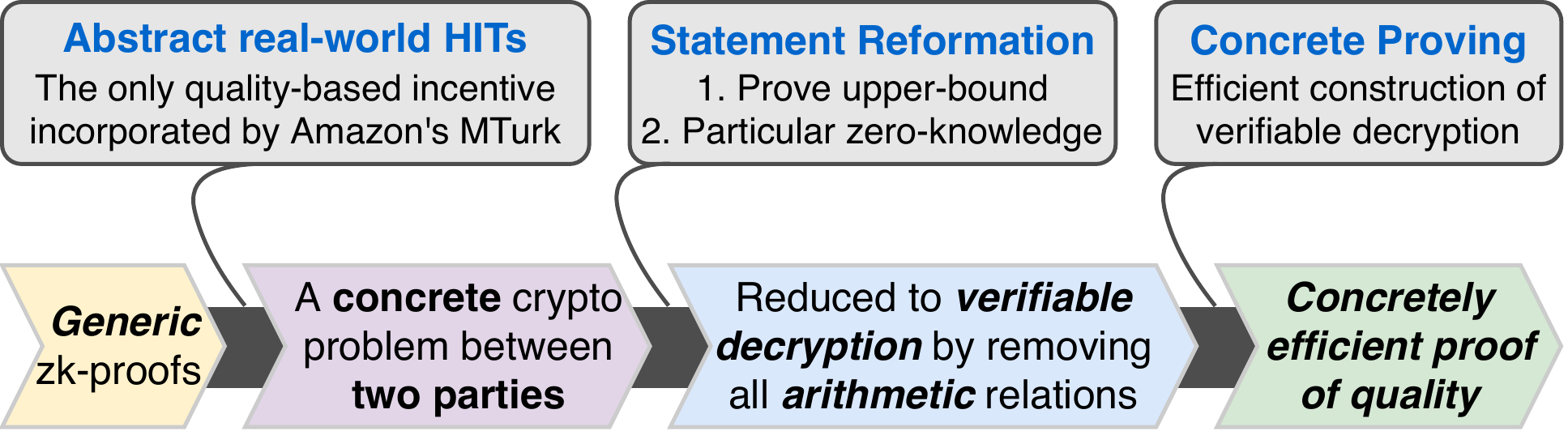}
	\caption{The path to  realizing efficient proofs for encrypted answers' quality.}
	\label{fig:optimize}
\end{figure}

The ideas behind our efficient proving scheme are a variety of  special-purpose optimizations to squeeze performance by removing needless generality, 
such that we reduce the problem of proving encrypted answers' quality from   generic-purpose zk-proof to particular verifiable decryption.
As shown in Fig  \ref{fig:optimize}, our core ideas are highlighted as:
\begin{itemize}
	\smallskip
	\item \underline{\smash{\em Abstracting real-world HITs}}. The first step is to well formulate an incentive mechanism  widely adopted  by  real-world HITs, namely, the {\em only} one incorporated by Amazon's MTurk \cite{mturkgolden}. Some golden standard  challenges (i.e., questions with known answers) \cite{sdhc} are mixed   with other  questions, so the  quality  of a worker is determined by her performance on the  golden standards.\footnote{This concrete incentive turns to be powerful, say it  can capture  most   HITs in Amazon's MTurk
	(c.f., the official   tutorial   \cite{mturkgolden}) and   also  adopted by the impactful   ImageNet   \cite{imagenet-details} to create large-scale deep learning benchmark.}

	We rigorously define the problem of proving the quality of encrypted answers for the above incentive mechanism.	
	So proving the quality of a worker is   reducible to a well-defined two-party problem:
	where the verifier needs to output the performance of the worker on a set of golden standard questions, given only a set of ciphertext answering these golden standards challenges.
	
	Nevertheless, solving this two-party problem is still challenging, 
	as it needs to compute the property of what a set of ciphertext are encrypting. 
	The generic version of the issue  falls into multi-input functional encryption \cite{goldwasser2014multi,boneh2015semantically}, which is well  known for its hardness and has no (nearly) practical solution so far.
	We thus   conduct the following   optimizations to further reduce the problem.
	
	%


	\item \underline{\smash{\em Statement reformation}}.
	The main obstacle  of removing the generic-purpose zk-proof framework is the arithmetic relations (i.e., some relationship unrepresentable in the algebraic domain).
	So	we   dedicatedly reform the statement of proving the quality of encrypted answers mainly in two ways
	 to remove all arithmetic relations.
	
	First, we let the   requester to prove the upper bound of each worker's quality instead of proving the exact number, which is a relaxation to the general cases, but does not abandon  any utility, since this property is enough to prevent any corrupted requester from paying less than what a worker deserves in our context where the reward is an increasing function of quality. 
	Second,  we realize that given the system's public knowledge,
	a tiny and constant portion of each worker's answer (i.e., the part answering gold standards) is already leaked,
	since this little portion is simulatable by the public knowledge; thus   we can explicitly  reveal   these ``already-leaked'' information.
		
	
	
	The above   reformations allow us to  reduce the problem of proving the quality of encrypted answers to standard  verifiable  encryption  without 
	 sacrificing securities/utilities.


	
	\item \underline{\smash{\em Concretely efficient proving scheme}}.
	Following the above optimizations,  the problem eventually is reduced to verifiable encryption,
	which becomes representable in concrete algebraic relations. 
	Along the way, we present a  certain  variant of verifiable decryption that is  concretely tailored for the  scenario of HITs where the plaintexts are short,
	and thus squeeze most performance out of it
	%
	and boost 
	private decentralized HITs practically.
	%
	
\end{itemize}
%


\section{Other related Work}\label{related}

Besides existing private decentralized HITs \cite{zebralancer-full,zebralancer}  discussed earlier, here we   briefly review  some pertinent  generic cryptographic frameworks  and  discuss their insufficiencies in the concrete  context of private decentralized crowdsourcing.

\ignore{
{\bf Fundamental issues of centralized HITs}. 
%
The most serious applicability issue of HITs is that the crowd-shared data can be low-quality if being submitted by misconducting workers (or even bots) \cite{sdhc, mturkbotpanic}. 
To collect truthful data instead of bogus \cite{sdhc}, tremendous efforts set forth the idea of using  incentive mechanisms \cite{KH12,ZV12,ZLM14,JSC15} to proactively motivate workers to make real efforts. 
Many of these mechanism designs incorporate some subroutines to evaluate data ``quality'' (e.g., proofreading, reCAPTCHA, etc)   \cite{sdhc,mturkbot}, and then focus on a sort of {\em fairness}  \cite{ZV12}:
{\em the requester rewards the workers, iff the workers  meet  the ``quality'' standards.}

%
%
%
%
%
The effectiveness of most incentive mechanisms naturally stems from an assumption 
that a  trusted third-party (TTP)  always stands out to enforce the incentives honestly per se \cite{zebralancer},  otherwise, the basic functionalities of HITs can be completely hindered by ``free-riders'' (i.e. the dishonest workers who reap rewards without making real efforts) and ``false-reporters'' (i.e. the malicious requesters that reap data without paying) \cite{ZV12}, which even seems inherent as an analogy of the well-known impossibility of the two-party fair exchange without TTP \cite{PG99}.


%

Then the critical security threats of HITs sprout up immediately, 
as in the context of crowdsourcing, tremendous real-world incidents hint us how vulnerable it is to ``trust'' third-party platforms \cite{zebralancer, LWY17}.
For example, MTurk, a popular platform run by a reputed tech giant, is well-known for its bias towards requesters against workers, and allows corrupted requesters  reap data without paying \cite{MCN16}.  Worse still, all well-known weaknesses of third-parties (e.g. single point failure \cite{WazeDown}, privacy leakage \cite{Apple}, etc.)  remain as serious worries in the specific use-case of HITs \cite{zebralancer, LWY17}. 
Let alone these platforms always impose quite expensive service fees for hosting HITs, e.g., MTurk charges  20-45\% of workers' rewards \cite{mturkfee}.

}

\smallskip
\noindent{\bf Privacy-preserving blockchain.}  A variety of studies \cite{KMS16,ZNP15,solidus} consider the general framework for privacy-preserving blockchain and smart contract. The approaches are powerful in the sense of their generality, yet are expensive for concrete use-cases in practice. 
For example, Hawk \cite{KMS16} leverages generic zk-proofs  to keep blockchain private, but incurs expensive proving expenses.
As such, it is unclear how to leverage these generic frameworks to design concretely efficient protocol for the special-purpose of crowdsourcing  \cite{zebralancer}.

\smallskip
 \noindent
{\bf Fair MPC using blockchain.} 
Decentralized crowdsourcing is a special-purpose fair MPC using blockchain. 
Kiayias, Zhou and Zikas \cite{KZZ16} consider the generic version of fair MPC in the presence of blockchain, but it is unclear how to adopt their generic  protocol in practice without expensively computational costs.
Recently, increasing interests focus on special-purpose variants of fair MPC in   aid of blockchain.
For example, \cite{bentov2017instantaneous,kumaresan2015use,david2018kaleidoscope} consider     poker games.
But  these  special-purpose solutions are over-tuned for  distinct   scenarios  and are unclear how to be used for private decentralized crowdsourcing.

\smallskip
 \noindent
{\bf Multi-input functional encryption.} The core problem of private decentralized crowdsourcing is to let the blockchain learn the quality of encrypted answers, which is straightforwardly reducible to    multi-input functional encryption (MIFE) \cite{goldwasser2014multi}. 
But   MIFE   relies on indistinguishability obfuscation   \cite{goldwasser2014multi} or
multi-linear maps \cite{boneh2015semantically}, which  currently we do not notice how to  instantiate under standard cryptographic
assumptions.


\section{Preliminaries}\label{preliminary}

Here we briefly review some relevant cryptographic notions. Following convention, we let $\overset{\$}{\leftarrow}$ to denote uniformly sampling and   $\approx_c$ to denote {\em computationally indistinguishable}.

\smallskip
\noindent
\textbf{Cryptocurrency ledger.} The cryptocurrency maintained atop the blockchain  instantiates a global bookkeeping ledger (e.g. denoted by $\huaL$) to  deal with ``coin'' transfers, transparently. It can be called out by an ideal functionality (i.e., a standard model of so-called smart contract \cite{KMS16,KZZ16}) as a subroutine to assist conditional payments. Formally,   cryptocurrency $\huaL$ can be seen as an ideal functionality interacting with a set of parties $\huaP=\{\huaP_i\}$   and the  adversary; it stores the balance $b_i$ for each $\huaP_i \in \huaP$, and handles the following oracle queries  \cite{KMS16, perun}:
\begin{itemize} 
	\item $\mathsf{Freeze Coins}$. On input $(\mathsf{freeze}, \mathcal{P}_i, b)$ from an ideal functionality $\huaF$ (i.e. a  smart contract), check whether $b_i \geq b$ and proceed as follows: if the check holds, let $b_i=b_i-b$ and $b_{\huaF} = b_{\huaF} + b$, send $(\mathsf{frozen}, \huaF, \mathcal{P}_i, b)$ to every entity; otherwise,  reply with $(\mathsf{nofund}, \mathcal{P}_i, b)$.
	\item $\mathsf{Pay Coins}$. On input $(\mathsf{pay}, \mathcal{P}_i, b)$ from an ideal functionality $\huaF$ (i.e.  a smart contract), check whether $b_{\huaF} \geq b$ and proceed as follows: if that is the case, let $b_i=b_i+b$ and $b_{\huaF} = b_{\huaF} - b$, send $(\mathsf{paid}, \huaF, \mathcal{P}_i, b)$ to every entity.
\end{itemize}

\smallskip
\noindent
\textbf{Commitment.}
The  commitment scheme is a two-phase protocol between a sender and a receiver. In the commit phase, a sender ``hides'' a string $\mathsf{msg}$ behind a   string $\comm$ with using a blinding $\mathsf{key}$, namely, the sender transmits $\comm = \commit(\mathsf{msg},\mathsf{key})$ to the receiver. 
In the reveal phase, the receiver gets  $\mathsf{key}'$ and   $\mathsf{msg}'$   as opening for $\comm$, and executes  $  \open(\comm,\mathsf{msg}',\mathsf{key}')$ to output    0 (reject) or (1) accept.
Through the paper, we consider \emph{computational hiding} and \emph{computational binding}. The former one requires the commitments of   two strings are computationally indistinguishable. The latter one means the receiver would not accept an opening to reveal  $\mathsf{msg}'\ne\mathsf{msg}$,
except with  negligible probability. 


\smallskip
\noindent
\textbf{Decisional Diffie-Hellman (DDH).}
 DDH problem is to tell that $d=ab$ or $d\overset{\$}{\leftarrow}\mathbb{Z}_p$, given $(g,g^a,g^b,g^d)$ where $a,b\overset{\$}{\leftarrow}\mathbb{Z}_p$ and $g$ is a generator of a cyclic group $\huaG$ of order $p$. 
 The DDH assumption states 
 $\{(g,g^a,g^b,g^d)\}\approx_c\{(g,g^a,g^b,g^{ab})\}$. 
 We assume  DDH assumption holds along with the paper.

\smallskip
\noindent
\textbf{Verifiable decryption.} 
We consider a specific verifiable public key encryption   ($\VPKE$) consisting of a tuple of algorithms  $(\KGen, \Enc, \Dec, \Prove, \Verify)$  with concrete verifiability to allow the decryptor to produce the plaintext along with a proof attesting the correct decryption \cite{camenisch2003practical}.

In short,   $\KGen$ can set up a pair of encryption-decryption algorithms $(\Enc_h, \Dec_k)$,  where $h$ and $k$ are public and private keys respectively. We let any $(\Enc_h, \Dec_k)\leftarrow\KGen(1^\lambda)$ to be a  public key encryption  scheme   satisfying semantic security. For presentation simplicity, we   also let $(\Enc_h, \Dec_k)$  denote the public-secret key pair $(h, k)$.
Moreover,  for any $(h, k)\leftarrow\KGen(1^\lambda)$, the $\Prove_k$ algorithm explicitly inputs the private key $k$ and the ciphertext $c$, and outputs a message $m$ with a proof $\pi$; the $\Verify_h$ algorithm explicitly inputs  the public key $h$ and $(m, c, \pi)$, and outputs 1/0 to accept/reject the statement that $m = \Dec_k(c)$.  
Beside,  we let $\VPKE$   to  satisfy the following extra properties (i.e., a specifically verifiable decryption):
\begin{itemize}
	\item {\em Completeness}. $\Pr[ \Verify_h(m, c, \pi)=1 \mid (m, \pi) \leftarrow \Prove_k(c)]= 1$, for  $\forall$   $c$  and $(h, k)\leftarrow\KGen(1^\lambda)$;
	\item {\em Soundness}.  For any  $(h, k)\leftarrow\KGen(1^\lambda)$ and    $c$, any  probabilistic polynomial-time (P.P.T.) $\huaA$ cannot produce  $\pi$   fooling  $\Verify_h$ to  accept that  $c$ is decrypted to $m'$  if $m' \ne \Dec_k(c)$,   with except negligible probability;
	\item {\em Zero-knowledge}. The proof $\pi$ can be simulated by a P.P.T. simulator $\huaS_\VPKE$, on input only public knowledge $m$, $h$ and $c$ that indeed satisfy $(m, c, h) \in \huaL_\VPKE := \{ \vec{x}:=(m, c, h) \mid m = \Dec_k(c) \wedge (h,k) \leftarrow \KGen(1^\lambda)\}$
\end{itemize}

\smallskip
\noindent
\textbf{Random oracle.} We treat the  cryptographic hash function as a global and programmable random oracle \cite{RO}  and denote the hash function with $\huaH$ through the paper.


\ignore{
\smallskip
\noindent\textbf{Exponential ElGamal encryption.} Exponential ElGamal encryption is a variant of ElGamal encryption for short messages:
\begin{itemize}[leftmargin=0.3cm]
	\item $\Enc(m, h)$, on the inputs are a short message $m$ in $[0,2^l-1]$ and the public key $h:=g^k$, chooses a random $r$, and outputs the ciphertext denoted by $(c_1,c_2)$, where $c_1 = g^r$ and $c_2 = g^m h^r$. We also use $\Enc(m)$ for short.
	\item $\Dec((c_1,c_2),k)$, on inputting the ciphertext  $c=(c_1,c_2)$ and the private key $k$,
	computes $M=c_2 / (c_1)^k$;
	for each message $m'$ in $[0:2^l-1]$, if $g^{m'} {=} M$, then outputs $m'$; if no message $m'$ in $[0:2^l-1]$ lets $g^{m'} {=} M$, outputs $\bot$. We will use $\Dec(c_1,c_2)$ or $\Dec(c)$ for short.
\end{itemize}
When DDH assumption holds, the above scheme is semantically secure. 
Remark that the use of exponential ElGamal is to leverage the discrete exponentiation to encode short messages (instead of using its additive homomorph).

\noindent\textbf{Proof of knowledge (PoK)}. A PoK scheme is a probably interactive protocol of two algorithms $(\Prove, \Verify)$, which allows a prover convince a verifier that he knows a witness satisfying a particular relationship.
We adopt (and might ``abuse'') the conventional notation of Camenisch, Kiayias, and Yung \cite{CKY09} to refer the PoKs of discrete logarithms related statements, e.g., $\PoK\{(a,b):y=g^ah^b\}$ denotes a PoK of $a$ and $b$, s.t. $y=g^ah^b$, where $g$ and $h$ are random generators of a cyclic group $\huaG$ and $y$ is a given element in $\huaG$. 
These PoKs can be transformed into non-interactive PoKs through Fiat-Shamir heuristic \cite{fiatshamir} in random oracle model. 
Then, to show {\em proof-of-knowledge}, an extractor can be ``constructed'' to learn the witness by rewinding the prover and programming the random oracle. 
%
We sometimes (but not always) require the property of {\em zero-knowledge}, i.e., a simulator can generate a view computationally indistinguishable from any corrupted P.P.T. verifier's without the witness.

}

\ignore{
\smallskip
\noindent\textbf{Oracle function of smart contract.}
The blockchain is a ``bulletin board'' maintained by a great number of Internet nodes collectively executing a set of pre-defined consensus rules. It can be viewed as an ideal public ledger \cite{GKL15} that eventually records the messages of Internet peers in the form of validly signed transactions. Moreover, certain functionalities defined by the underlying consensus can be faithfully fulfilled along the way, such as executing particular programs written in a pre-specified language \cite{Woo14}, a.k.a., smart contracts \cite{Sza97}.
In greater detail, the blockchain can be informally abstracted as \cite{KMS16}:

\begin{itemize}

\smallskip
\item {{\em Reliable message delivery}.}
The messages recorded by the blockchain are also known as transactions. For any valid transaction, it is first broadcasted to the whole network, and then eventually be included by a block, which further eventually appears in everyone's local replica \cite{GKL15}. Particularly, we assume such the ``delivery'' of a transaction can be done within a-priori delay \cite{KMS16}.
We also consider a network adversary who can reorder the transactions that have been broadcasted to the network but not yet been written into the blockchain.

\smallskip
\item {{\em Correct computation}.} If the blockchain supports to interpret and execute Turing-complete scripts in the transactions, it can further be seen as a Turing-complete machine driven by transactions \cite{Vit14}.
Specifically, all blockchain nodes will persistently receive new blocks to extend the blockchain, and they will execute ``programs'' defined by the current chain, with taking the transactions in the newest block as inputs. Since the execution of the programs is deterministic according to the consensus rules, the agreement on programs' outputs can be reached within the whole blockchain network.

\smallskip
\item {{\em Transparency}.} Everything recorded in the blockchain is visible to the public, since blockchain is a permissionless network where the adversary can also join. In high-level, the blockchain can be abstracted as a transparent computer whose inputs, outputs and all internal states are in the clear. For instance, if a smart contract decrypts a ciphertext with a decryption key, the key and plaintext will be known by the public.

\smallskip
\item {{\em Blockchain address}.} Each blockchain transaction is sent in the pseudonym of a blockchain address bound to a public key. Only if a validly signed transaction will be included and ``delivered'' by the blockchain.

A smart contract deployed in the blockchain also has a unique address, such that one can send a transaction pointing to this address to execute the smart contract.
\end{itemize}

\smallskip
\noindent\textbf{Concretely efficient NIZK.}
A zero-knowledge proof system (zk-proof) can be established for a language $\huaL := \{\vec{x} \mid \exists \vec{w}, s.t., C(\vec{x},\vec{w}) = 1 \}$, such that one called prover can convince the other one called verifier a statement $\vec{x} \in \huaL$, without revealing the private witness $\vec{w}$.

\smallskip
Most concretely efficient zk-proofs are based on $\Sigma$-protocol and its variants \cite{Sch89, sigma}, which can be granted for a broad variety of concrete languages (e.g. some tuples are Diffie-Hellman type) within three rounds between the prover and the verifier (i.e. commitment, challenge and response).
Moreover, Fiat-Shamir transform \cite{fiatshamir} provides a simple way to convert $\Sigma$-protocols into extraordinarily efficient non-interactive zero-knowledge (NIZK) proofs in the random oracle model.
The security guarantees of such NIZK proof systems are: (i) \emph{soundness}, which means no prover can convince the verifier $x \in \huaL$, if $x \notin \huaL$; sometimes a stronger soundness a.k.a. \emph{proof-of-knowledge} is required to grant a knowledge extractor to extract the witness through interacting with the prover; (ii) \emph{zero-knowledge}, that is the distribution of proof can be simulated as seeing nothing related to prover's secret states, i.e., the witness does not leak through the proof.  Both the above properties shall hold with  overwhelming probabilities.

}

\ignore{
	It stores a list $Q$ indexed by preimages to record all queried preimage-hash pairs, and a list $P$ indexed by preimages to record all preimage-hash pairs programmed by adversary. It accepts the next oracle queries: 
	\begin{itemize} 
		\item $\mathsf{Query}$. On input $(\mathsf{query}, x)$, respond with $(\mathsf{queried}, h)$, where $h=Q_x$ if $Q_x$ has been set, or $h \overset{\$}{\leftarrow} \{0,1\}^{\ell(\lambda)}$ with recording $Q_x=h$ if $Q_x$ has not been set. Note $x \overset{\$}{\leftarrow} X$ denotes to uniformly sample   $x$ from   $X$.
		\item $\mathsf{Program}$. On input $(\mathsf{program}, x, h)$ from adversary $\huaA$, abort if $Q_x$ has been set, otherwise set $Q_x=P_x=h$.
		\item $\mathsf{IsPrgrmd}$. On input $(\mathsf{isPrgrmd}, x)$, if $P_x$ was set, respond with $(\mathsf{prgrmd}, 1)$, otherwise respond with $(\mathsf{prgrmd}, 0)$.
	\end{itemize}
}

\smallskip
\noindent
\textbf{Simulation-based paradigm.} 
To   formalize and prove   security,
a real world and an ideal world can be    defined and   compared: (i)  in the real world, there is an actual protocol $\Pi$  among the parties, some of which can be corrupted by an  adversary $\huaA$; (ii) in the ideal world, an ``imaginary'' trusted   ideal functionality $\huaF$  replaces the protocol and interacts with honest parties and a simulator $\huaS$. 
We say that
$\Pi$   {\em securely realizes} $\huaF$, if for
$\forall$   P.P.T. adversary $\huaA$ in the real-world, $\exists$ a P.P.T. simulator $\huaS$ in the ideal-world, s.t. the two
worlds cannot be distinguished, which means:
no P.P.T. distinguisher $\huaD$ can attain non-negligible advantage to distinguish ``the joint distribution  over the outputs of  
honest parties and  the adversary $\huaA$ in the real world'' from ``the joint distribution over the outputs of  
honest parties  and  the simulator $\huaS$ in the ideal world''.
%
%
%
%
%
%
%

Moreover, we consider the  static adversary who can corrupt some parties before the protocol starts.
%
The advantage of simulation-based paradigm is   that all  desired behaviors of the protocol can be precisely described by the  ideal functionality. Remarkably,   this approach has been widely adopted to analyze decentralized protocols \cite{KMS16,KZZ16,bentov2017instantaneous}   to capture the subtle adversary in the blockchain.



\section{Formalization of Decentralized \underline{H}uman \underline{I}ntelligent \underline{T}ask\underline{s}}\label{sec:problem}
This section   rigorously defines our security model, by giving the ideal functionality of   \underline{H}uman \underline{I}ntelligent \underline{T}ask\underline{s} (HITs)  that captures the   security/utility requirements of the state-of-the-art HITs    in  reality \cite{feifei,ng,video,Imagenet,ABI13, sdhc, mturkbotpanic,PWC15,mturkgolden,SZ15,imagenet-details,mturkbot,turkopticon,MCN16}. 
Our security modeling 
sets forth a clear security goal, that is: the HITs in the real world shall be as ``secure'' as the HITs in an admissible ideal world.


\smallskip
\noindent{\bf  Reviewing the HITs in reality.}
Let us briefly review the HITs adopted in reality
\cite{feifei,ng,video,Imagenet,ABI13, sdhc, mturkbotpanic,PWC15,mturkgolden,SZ15,imagenet-details,mturkbot,turkopticon,MCN16}, before presenting our abstraction of their ideal functionality.


\smallskip
\underline{\smash{\em {Parties \& process flow}}}.
There are two explicit roles in a HIT, i.e., the requester and some workers.\footnote{There is   an implicit registration authority (RA), who is   required by real-world crowdsourcing platforms e.g. MTurk to prevent adversary   forging a large number of identities (a.k.a. Sybil attackers). In practice, RAs can be instantiated by (i) the platform itself (e.g., MTurk), and (ii) the certificate authority who provides authentication service. Our solution can inherit these  established RAs, and we therefore omits such the  implicit RAs,  with assuming all identities are granted. If the participants   are interested in anonymity,    anonymous-yet-accountable authentication scheme   \cite{zebralancer,anonpass} can be used; however, those are orthogonal techniques out scope of this paper.} 
The \emph{requester}, uniquely identified by $\huaR$, can post a task $\huaT$ to collect a certain amount of answers. In the task, $\huaR$ also promises a concrete reward policy. 
The \emph{worker} with a unique identifier $\huaW_j$, submits his answer $\bm{a}_j$ to expect receive the   reward. 
%

\smallskip
\underline{\smash{\em {Task design}}}.
A HIT consists of a sequence of   questions denoted by $\huaT=( q_1,\cdots,q_N )$, where  each $q_i$ is a multiple choice question and $N$ is the number of questions in the task. The answer of each question must lay in a particular  $\mathsf{range} \subset \mathbb{N}\cup0$ pre-specified when $\huaT$ is published.

%

The above HIT design is based on batched choice questions, which follows real-world practices \cite{feifei,ng,video,Imagenet,ABI13, sdhc, mturkbotpanic,PWC15,mturkgolden,SZ15,imagenet-details,mturkbot,turkopticon,MCN16}   to remove  ambiguity, thus letting  workers precisely understand the task.
For example,  Fei-fei Li \emph{et al.} \cite{feifei,RL10,imagenet-details} used the technique  to create the   deep learning benchmark ImageNet, and Andrew Ng \emph{et al.} \cite{ng} suggested it for     language annotations.

\smallskip
\underline{\smash{\em {Answer quality}}}.
The quality of an answer   is induced by a function  $\reward(\bm{a}_j;sp)$, 
where $\bm{a}_j=( a_{(1,j)},\cdots,a_{(N,j)} )$ is the  answer submitted by worker $\huaW_j$, 
and $sp$ is some secret parameters of requester.
The output of $\reward(\cdot)$   is denoted by $\chi_j$, which is said to be the quality of worker  $\huaW_j$.

\ignore{
In general, the incentive mechanisms used in crowdsourcing can be seen as a function denoted by $\reward(\{\bm{a}_j\}_{i \in [1:K]};\aux)$, where $\{\bm{a}_j\}_{i \in [1:K]}$ is the set of all workers' submissions, and $\aux$ represents some auxiliary inputs (including budget) that parameterize the incentive mechanism \cite{zebralancer}.
In this paper, we consider the widespread state-of-the-art efforts on designing incentive mechanisms, and particularly,
focus on a broad array of elegant incentive design based on so-called gold-standards, as many impactful results and real-world experiences \cite{ng,SZ15,video,RL10,imagenet-details,mturkbot} have revealed its great effectiveness and graceful simplicity, through which we hopefully can establish a concrete decentralized crowdsourcing protocol that is practically efficient and useful.
}

The above abstraction    captures the quality-based incentive mechanism adopted by real-world HITs in Amazon's MTurk \cite{SZ15,imagenet-details,mturkgolden,mturkbot}.
For example,    a   task $\huaT$ consists of   $N$ questions, out of which   $M$ questions are golden-standard questions that are ``secretly'' mixed. 
The {\em quality} of a worker can be computed, due to her accuracy in the  $M$ golden-standard questions. 

Formally, in the qualify function $\reward(\bm{a}_j;sp)$, the parameter $sp=(G,G_S)$, where $G \subsetneq [1,N]$ represents the  randomly chosen  indexes of the  golden-standard questions, and $G_S=\{s_i|s_i\in\mathsf{range}\}_{i \in G}$ represents the known answers of the golden-standard questions. 
Following  the real-world practices \cite{SZ15,imagenet-details,mturkgolden,mturkbot}, the quality of an answer  $\bm{a}_j=( a_{(1,j)},\cdots,a_{(N,j)} )$  is:  
%
 $$\reward(\bm{a}_j, (G,G_S)) = {\sum_{i \in G}  [a_{(i,j)}  {\equiv}  s_i] }$$
where  $[\cdot]$ is Iverson bracket to  convert any  logic proposition to  1 if the proposition is true and 0 otherwise.



\ignore{
let $\huaT$ consist of $M$ so-called ``gold-standard'' challenges, whose indexes are represented by . The set $G$ is ``secretly'' chosen by the requester, and shall be unpredictable by workers. The ground truth solutions of these gold-standards are known by the requester, and is denoted as $\{s_i\}_{i \in G}$, where $i$ represents how to index a particular gold-standard challenge in the task $\huaT$.
The requester has a budget $\tau$ to crowdsource $K$ answers from $K$ different workers, with promising an incentive mechanism that can be seen as a function parameterized by $\tau$, $G$ and $\{s_i\}_{i \in G}$. 

\smallskip
A worker $\huaW_j$ can submit his answers denoted by $\bm{a}_j = (a_{(1,j)},\cdots,a_{(N,j)})$, where $a_{(i,j)}$ is his answer for the particular question $q_i$. The well-deserved reward of worker $\huaW_j$ can be therefore well defined as $r_j := \reward( \{a_{(i,j)}\}_{i \in G};G,\{s_i\}_{i \in G},\tau)$, i.e., an incentive mechanism parameterized by the gold-standards and budget.

Remark that the above modeling of incentive mechanisms is essentially rather impressive \cite{sdhc}.

\noindent\underline{\emph{Remarks}}.
The specific model of the gold-standards based incentives not only captures the elegant state-of-the-art studies on effectively collecting image/language/video annotations \cite{feifei,RL10,imagenet-details,ng,mturkbot,video,SZ15}, but also represents a common scenario of crowdsourcing human knowledge via well designed tasks, with using few gold-standard challenges for quality control.
Take the following example: Alice is running a small startup, and aims to provide a map service of street parking availabilities. Unfortunately, at each moment, Alice and her employees only know the street parking availabilities at few spots, because her startup can neither afford to hire hundreds of employees to monitor every corner around the city, nor pay the huge cost of deploying the sensing infrastructure across the whole area. The little information known by Alice can be seen as her ``gold-standards'', and clearly, such information is too little to boost a useful service. So Alice turns to be a requester  to crowdsource more parking availability information from the crowd, and along the way, she leverages her few gold-standards to control data quality. Clearly, our crowdsourcing model is general enough to capture the above example as well.
}

\smallskip
\noindent{\bf Defining the decentralized HITs' functionality.}
\label{sec2-3}
Now we are ready to present our security notion of HITs in the presence of cryptocurrency.
We  formalize the ideal functionality of HITs (denoted by $\huaF_{hit}$) in the $\huaL$-hybrid model  as shown in Fig \ref{fig:ideal-func}.
%
Intuitively,  $\mathcal{F}_{hit}^{\mathcal{L}}$   abstracts a special-purpose multi-party secure computation,
in which: (i) a requester  recruits $K$ workers to crowdsource some knowledge, and (ii) each worker gets a payment of $\bitcoin/K$ from the requester, if submitting an answer meeting the minimal quality standard $\Theta$.
 %

In greater detail, the ideal functionality $\huaF_{hit}$ of HITs immediately implies the following security properties: 

\ignore{
\smallskip
\noindent\underline{\emph{Special fairness}}.
In general, the ideal functionality $\mathcal{F}_{hit}^{\mathcal{L}}$ captures the special fairness in HITs requires that: the requester $\huaR$ must pre-specify the parameters (e.g. $pp$ and $sp$) of her incentive mechanism when her task is published, and later  $\huaR$ cannot pay the worker $\huaW_j$ less than the well-deserved reward  $r_j=\reward(\bm{a}_j;pp,sp)$, if and only if $\huaR$ indeed receives the answer $\bm{a}_j$; additionally, the public parameters $pp$ must be leaked to the public once the task is announced, and the secret parameters $sp$ cannot be leaked until the completion of the task.
Remark that $\mathcal{F}_{hit}^{\mathcal{L}}$  essentially captures the subtle adversarial network attacker that can maliciously schedule message deliveries in the blockchain, since the adversary shall not be able to break the fairness guarantees of honest parties, even if she corrupts the message ordering, the requester, or a portion of workers, or corrupt all these.
}

\ignore{
\smallskip
\noindent\underline{\emph{Reward mechanism transparency}}. 
The pre-specified parameters of the incentive mechanism under use must be eventually leaked to the public, 
which is necessary to capture the cases where the requester exploits $sp$ to reduce her payments. 
For example, an adversarial requester can use fake gold-standards to intentionally avoid payments, and honest workers might be harmed. 
To this end, the functionality $\mathcal{F}_{hit}^{\mathcal{L}}$ eventually reveals the pre-specified secret parameters $sp$ of requester $\huaR$ for public audibility, which essentially ``mimics'' many ad hoc reputation systems established in the MTurk community \cite{MCN16,turkopticon,mturkreputation}.
Remark that this eventual transparency of incentive mechanisms shall not violate the special fairness, which means that though the secret parameters $sp$ are eventually leaked, they shall not be revealed until all answers have already been solicited.


\smallskip
\noindent\underline{\emph{Answers confidentiality}}. 
The confidentiality requires that the communication transcripts do not leak anything about the submitted answers before all data have been solicited; moreover, even upon the completion of the task, it only allows admissible leakage of data according to a (probably probabilistic) $\mathsf{leak}$  function pre-defined by $pp$ and $sp$.
This is first because these answers are valuable for the requester. And more importantly, it is necessary to prevent adversarial workers from copying and pasting others' submissions; otherwise, anyone will just copy the others without making any real effort, which simply renders the loss of basic usability in HITs. The definition of classical semantic security \cite{GM82} can be adapted for the purpose: the distribution of public communications can be simulated with only public knowledge (including the output of $\mathsf{leak}$ function).
Remark that when $\mathsf{leak}$ function is defined to output uniform sample from $\{0,1\}^\lambda$, it will essentially become analog to semantic security of encrypting $\bm{a}_j$; or another meaningful definition is to leak only the part of $\bm{a}_j$   relevant to ``gold-standard'' challenges (i.e. $\{a_{(i,j)}\}_{i\in G}$), and the other part related to non-gold-standards remains confidential.
}


\begin{figure}
	\fbox{
		\parbox{.975\linewidth}{
			
			\begin{center}
				\large{\bf The ideal functionality of  HIT $\mathcal{F}_{hit}^{\mathcal{L}}$} 
			\end{center}

			\small
			Given accesses to oracle $\mathcal{L}$, the functionality $\mathcal{F}_{hit}^{\mathcal{L}}$ interacts with a requester $\huaR$, a set of workers  $\{\huaW_j\}$
			and adversary $\huaS$. 	
			
			\hrulefill{} {\bf Phase 1: Publish Task} \hrulefill{}
			
			\begin{itemize}[leftmargin=0.3cm]
				\item Upon receiving   $(\mathsf{publish}, N, \bitcoin, K, \mathsf{range},  {\Theta},  G, G_S )$ from $\huaR$, 
				leak $(\mathsf{publishing}, \huaR, N, \bitcoin, K, \mathsf{range}, \Theta, |G|, |G_S|)$ to $\huaS$, {\color{blue} until the beginning of next clock period, proceed with the following delayed executions}: 
				\begin{itemize}
					\item send $(\mathsf{freeze}, \mathcal{P}_i, \bitcoin)$ to $\huaL$, if return $(\mathsf{frozen}, \mathcal{F}_{hit}^{\mathcal{L}}, \mathcal{P}_i, \bitcoin)$:
					\begin{itemize}
						\item store $N$, $\bitcoin$, $K$, $\mathsf{Range}$, $\bar{\chi}$ and $sp$ as internal states;
						\item initialize   $\mathsf{answers}\leftarrow\emptyset$, and goto next phase; 
					\end{itemize}

				\end{itemize}
			\end{itemize}

			\hrulefill{} {\bf Phase 2:  Collect Answers} \hrulefill{}
			\begin{itemize}[leftmargin=0.3cm]
				\item Upon receiving $(\mathsf{answer}, \bm{a}_j)$ from $\huaW_j$, leak the message $(\mathsf{answering}, \huaW_j, |\bm{a}_j|)$ to $\huaS$, {\color{brown}  till receiving $(\mathsf{approved})$ from  $\huaS$, continue with the delayed executions down below}:
				\begin{itemize}
					\item if  $(\huaW_j,\cdot) \in \mathsf{answers}$, do nothing;
					\item else, $\mathsf{answers} \leftarrow \mathsf{answers} \cup (\huaW_j,\bm{a}_j)$, send $\mathsf{answers}$ to $\huaR$, leak $(\huaW_j,|\bm{a}_j|)$ to $\huaS$, go to phase 3 if $|\mathsf{answers}|=K$.
				\end{itemize}
			\end{itemize}

			\hrulefill{} {\bf Phase 3: Evaluate Answers} \hrulefill{}
			\begin{itemize}[leftmargin=0.3cm]
				\item Upon entering this phase, leak all received messages to $\huaS$, {\color{blue}  until the beginning of next clock period, proceed to run the following delayed executions} for   each $\mathcal{W}_j \in \{\mathcal{W}_j \mid   (\huaW_j,\cdot) \in \mathsf{answers}\}$:
				\begin{itemize}
					\item if receiving $(\mathsf{evaluate}, \mathcal{W}_j )$   from   $\huaR$, 
					proceed as: 
					\begin{itemize}
						\item  
						check whether $ \reward(\bm{a}_j, (G, G_S) ) \ge \Theta$, if that is the case,
						send $(\mathsf{pay}, \mathcal{W}_j, \bitcoin/K)$ to $\huaL$, 
						and   leak $(\mathsf{evaluated},\mathcal{W}_j, G, G_S)$ to all entities including  $\huaS$;

					\end{itemize}
					\item if receiving $(\mathsf{outrange}, \mathcal{W}_j, i)$   from $\huaR$, 
					proceed as: 
					\begin{itemize}
						\item  
						if   $a_{(i,j)} \notin \mathsf{range}$, leak $(\mathsf{outranged},\mathcal{W}_j, a_{(i,j)})$ to all entities, otherwise send $(\mathsf{pay}, \mathcal{W}_j, \bitcoin/K)$ to $\huaL$.
					\end{itemize}
					\item else, no  message from $\huaR$ was received, proceed as: 
					\begin{itemize}
						\item if   $\bm{a}_j \ne\bot$, send $(\mathsf{pay}, \mathcal{W}_j, \bitcoin/K)$ to $\huaL$.
					\end{itemize}

				\end{itemize}	
			\end{itemize}
			
			
			
			
			
			
			
		}
	}
	\caption{The (stateful) ideal functionality of coin-aided HIT $\mathcal{F}_{hit}^{\mathcal{L}}$. The {\color{blue} blue} text shows  $\mathcal{F}_{hit}^{\mathcal{L}}$ is proceeding synchronously as the adversary can delay message deliveries up to next clock period \cite{KMS16,KZZ16}; the {\color{brown} brown} text means that $\mathcal{F}_{hit}^{\mathcal{L}}$ has to proceed asynchronously as if the adversary can arbitrarily delay messages.}\label{fig:ideal-hit}
	\label{fig:ideal-func}
		 \vspace{-2mm}
\end{figure}

\begin{itemize}
\item
\underline{\smash{\emph{Fairness}}}. Our ideal functionality   captures a   strong notion of fairness, that means:
the worker get paid, if and only if s/he puts forth a qualified answer (instead of copying and pasting somewhere else).
In greater detail, the requester specifies a sequence of $N$ multi-choice questions, 
which are multi-choice questions having some options in $\mathsf{range}$ and contain $|G|$ gold-standard challenges.\footnote{We explicitly consider that $|G|$ and $\mathsf{range}$ are small constant in the HITs ideal functionality. Such   modeling follows real-world practices \cite{feifei,ng,video,Imagenet,ABI13, sdhc, mturkbotpanic,PWC15,mturkgolden,SZ15,imagenet-details,mturkbot,turkopticon,MCN16}.
In particular, $|\mathsf{range}|$ is a small constant in practice, because it represents few options of each multi-choice question in HIT;
and $|G|$ is also a small constant, as it represents few gold-standard challenges in a HIT task.}
For each worker, s/he has to 
(i) meet a pre-specified   quality standard $\Theta$  
and (ii) submit answers in the range of options,
in order to receive the pre-defined payment $\bitcoin/K$.


\item
\underline{\smash{\emph{Audibility of gold-standards}}}. 
The choice of golden standards is up to the requester, so it becomes a realistic worry that a malicious requester  uses some bogus as the answers of golden standard questions. 
%
%
%
The ideal functionality aims to abstract the best prior art \cite{MCN16,turkopticon} regarding this issue so far,
that means the golden standards become public auditable once the HIT is done. 
This abstraction ``simulates''  the ad-hoc reputation systems maintained by the MTurk workers  to grade the reputations of the MTurk requesters in reality \cite{MCN16,turkopticon}.

\item
\underline{\smash{\emph{Confidentiality}}}. 
It means     any worker cannot  learn the advantage information during the course of protocol execution. 
Without the property,    workers can   copy and paste     to free ride, 
so it is a minimal requirement to ensure  the usefulness of decentralized HITs. Our ideal functionality  naturally captures the property.

\end{itemize}

\noindent{\bf Adversary.} We consider   probabilistic polynomial-time adversary in the real world.
It can  corrupt the requester and/or some workers statically, before the real-world protocol begins. The  uncorrupted parties are said to be honest. 
Following the standard blockchain model \cite{KMS16,KZZ16}, we also abstract the ability of the real-world adversary to control the communication (between the blockchain and   honest parties) as:
(i) it follows the synchrony assumption  \cite{GKL15,KMS16}, namely, we let there is a global clock \cite{GKL15,KMS16}, and the adversary  can delay any messages sent to the blockchain up to a-priori known time (w.l.o.g., up to the next clock); (ii) the adversary can manipulate the order of so-far-undelivered messages sent to the blockchain, which is  known as  the ``rushing'' adversary. 
%

%
%
%


\smallskip
\noindent{\bf Expressivity of HITs' ideal functionality.}
The ideal functionality $\huaF_{hit}$   not only captures the elegant state-of-the-art of collecting image/language/video annotations \cite{feifei,RL10,imagenet-details,ng,mturkbot,video,SZ15}  but also reflects the common scenario of crowdsourcing human knowledge.
%
Consider the next example: Alice is running a small startup, and aims to provide a service to visualize the availabilities of street parkings. Unfortunately, at each moment, Alice   only knows the availabilities of street parkings  at quite few spots, since she cannot afford the cost of monitoring every corner around the city. The little a-priori knowledge of Alice is her ``golden standards'', and such information is too little to boost a useful service. 
So Alice can crowdsource more street parking information   from a few workers,  with using her few golden standards to control the quality of solicited data.

In light of the above discussion, it is fair to say that our abstraction  is expressive   to capture   most real-world practices of crowdsourcing human knowledge   (e.g.   HITs in MTurk).

\ignore{
\smallskip
\noindent{\bf Efficiency requirements.}
After all, we require the protocol shall be efficient enough for large-scale crowdsourcing tasks (e.g., to crowdsource hundreds of answers from dozens of workers). Therefore, we expect that the (spatial, communicational and computational) costs of workers, requesters and the blockchain are asymptotically and practically cheap. Particularly, we expect to overcome the prohibitive off-chain infeasibility of using generic NIZK proofs such as zk-SNARK. And meanwhile, the critical on-chain cost shall also be extremely cheap, such that the system can be compatible with real-world blockchains, even if these chains are still at the infant ages and only support pretty light on-chain computations.

\smallskip
The concrete performance of our protocol  can be found in Section.\ref{implementation}. A careful comparison between this work and the state-of-the-art solutions such as \zebralancer{} is also provided to showcase the significant improvement on feasibility.

\yuan{pick up here ...}

{\color{darkgray} //  Note that there is an evaluation function $\mathsf{eval}(\bm{a}_j;pp,sp)$ to determine the ''values'' of an answer $\bm{a}_j$, with two parameters $pp$ and $sp$. The ``public'' parameter $pp$ is revealed when publishing HIT, e.g., including $K$ (the number of answers to solicit) and $\bitcoin$ (the budget); the ``secret'' parameter $sp$ is revealed after all answers are collected to terminate the HIT.}

\smallskip
\noindent{\bf Technical challenge.} Recall that main advantage of the blockchain is the smart contract that holds monetary deposit and then enforces the conditional disbursements of the deposit according to a piece of codified agreement, while the most critical challenge of decentralizing crowdsourcing is the inherent tension between data privacy and blockchain transparency.

\smallskip
One might suggest to use the general zero-knowledge proof tools for any NP-languages (e.g. zk-SNARK \cite{pinocchio}) to overcome the blockchain transparency issue \cite{zebralancer,KMS16} in the crowdsourcing scenario: the requester can leverage zk-SNARK to generate zk-proofs to attest the well-deserved rewards of workers' encrypted submissions without leaking the decrypting key. But the zk-SNARK provers unavoidably inheres considerable space and time complexities, due to the underlying NP-reductions to convert any computable NP-languages into their circuit satisfiability representations \cite{arya}. Unfortunately, this inherent issue causes most existing zk-SNARK instantiations suffer from nearly impractical computational costs of proving complex statements, even after a lot dedicated implementation optimizations have been made to increase the proving efficiency.
Particularly, in the private crowdsourcing, this inherently inefficient prover brings prohibitive off-chain costs at the requester side, and essentially corresponds the main bottleneck of the system usability, even if the critical on-chain verification efficiency can be satisfied. For example, around a hundred of gigabyte memory and several hours are used to prove the majority of only 11 RSA-OAEP encrypted answers \cite{zebralancer}.

\smallskip
Clearly, the off-chain feasibility of all the existing proposals is problematic, mainly due to the expensive off-chain cost at the requester end, which concludes our main technical challenge as follows.

\smallskip
\textbf{Challenge}. \emph{Can we design a private decentralized crowdsourcing protocol (under a particular incentive mechanism), such that it is efficient enough at both the on-chain and off-chain sides?}
}


\section{HITs Protocol and Security Analysis}
\label{protocol}
This section   elaborates  our   practical protocol for decentralized HITs. We begin with an important building block for proving the quality of encrypted answers.
Then we  showcase the smart contract functionality $\Contract_{hit}$ that interacts with the workers and the requester. Later, the detailed protocol is given in the presence of $\Contract_{hit}$. We finally prove that our protocol securely realizes the ideal functionality $\huaF_{hit}$ of HITs.



\subsection{Proof of   quality of encrypted answer ($\PoQoEA$)}
\label{sec:poqoea}

The core building block of our novel decentralized  protocol is to  allow  the requester {\em   efficiently prove the quality of encrypted answers}.
 We formally define this concrete purpose to set forth the notion of $\PoQoEA$, and then present  an efficient reduction from it to verifiable  decryption ($\VPKE$).

\smallskip
\noindent{\bf Defining $\PoQoEA$}.
The problem we are addressing here is to prove that: an encrypted answer $\bm{c}_j$ can be decrypted to obtain some $\bm{a}_j$ s.t. the quality of $\bm{a}_j$ is $\chi$, without leaking anything other than $\bm{c}_j$, $\chi$ and the parameters of quality function.

To capture the problem,
the state-of-the-art \cite{zebralancer,KMS16}   adopts the standard notion of zk-proof in order to support generic quality measurements. 
Different from   existing solutions, we particularly tailor the notion of zk-proof to obtain a fine-tuned notion of $\PoQoEA$ for the widely adopted quality function
defined in \S \ref{sec:problem}. Namely, we consider $\reward(\cdot~; G, G_s)$ where $G$ is the index of gold-standards and $G_s=\{s_i\}_{i \in G}$ is the ground truth of golden standards, and aim to remove the unnecessary generality in the concrete setting. 

Precisely, given the quality function $\reward(\cdot~; G, G_s)$ and any established public key encryption scheme $(\Enc_h, \Dec_k)\leftarrow\KGen(1^\lambda)$, we can define $\PoQoEA$ as   a tuple of hereunder algorithms $(\cProve_k, \cVerify_h)$: 
\begin{enumerate}
	\item $\cProve_k({\bf c}_j, \chi, G, G_s ) \rightarrow  \pi$. Given the encrypted answer ${\bf c}_j = ( c_{1,j},\dots,c_{N,j} )$, the quality $\chi$, and the golden standards $(G,$ $G_s)$, it outputs a proof  $\pi$ attesting $\chi$ is the quality of ${\bf c}_j$; the algorithm explicitly takes the secret decryption key $k$ as input;
	\item $\cVerify_h({\bf c}_j, \chi, \pi, G, G_s) \rightarrow 0/1$. 
	It outputs 0 (reject) or 1 (accept), according to whether $\pi$ is a valid proof attesting $\chi$ is the actual quality of ${\bf c}_j$; the algorithm explicitly takes the public encryption key $h$ as input;
\end{enumerate}

Moreover, $\PoQoEA$  shall satisfy the following properties:
\begin{itemize}
	\item \underline{\smash{\emph{Completeness}}}. $\PoQoEA$ is   complete, if for any $G$, $G_s$, ${\bf c}_j$, $\chi$ and $(\Enc_h, \Dec_k)$ s.t. 
	$\chi = \reward(\Dec_k({\bf c}_j); G, G_s)$, there is 
	$\Pr[\cVerify_h({\bf c}_j, \chi, \pi, G, G_s) = 1 \mid   \pi \leftarrow \cProve_k({\bf c}_j, \chi, G, G_s)]=1$;
	\item \underline{\smash{\emph{``{Upper-bound}'' soundness}}}. $\PoQoEA$ is  upper-bound sound, if for any $G$, $G_s$, ${\bf c}_j$, $\chi$ and $(\Enc_h, \Dec_k)$,  for $\forall$ P.P.T.     $\huaA$,  there is
	$\Pr[ \cVerify_h({\bf c}_j, \chi, \pi', G, G_s) = 1 \wedge \chi<\reward(\bm{a}_j;G, G_s) \wedge {\bf a}_j=\Dec_k({\bf c}_j) \mid \pi'  \leftarrow \huaA(G, G_s, \chi, {\bf c}_j,\lambda,\Enc_h,\Dec_k)]\le negl(\lambda)$, where  $negl(\lambda)$ is a negligible function in   $\lambda$; so  it is computationally infeasible   to produce a valid proof, if  $\chi$ is not the    upper bound of the  quality of what ${\bf c}_j$ is encrypting;
	\item \underline{\smash{\emph{{``Special'' zero-knowledge}}}}. 
	Conditioned on $|G|$ and the range of elements in $G_s$ are   small constants,
	for any $G$, $G_s$, ${\bf c}_j$, $\chi$ and $(\Dec_k, \Enc_h)$, 
	$\exists$ a  P.P.T. simulator $\huaS$ that can simulate  the communication scripts of $\PoQoEA$ protocol on input only   $h$, $G$, $G_s$, ${\bf c}_j$, and $\chi$.
	

\end{itemize}

\smallskip
\noindent{\bf Rationale behind the finely-tuned abstraction}.
The   notion of $\PoQoEA$ is defined to remove needless generality  in the special  case of HITs. 
Compared to the state-of-the-art  notion \cite{zebralancer}, $\PoQoEA$  
is more promising to be  efficiently constructed,
as it brings the following definitional advantages:
\begin{itemize}
	\item We adopt \emph{``upper-bound'' soundness} to ensure that any probably corrupted requester cannot forge the upper bound of quality of each worker. 
	Such the tuning stems from a basic fact that: the reward of a worker is   an increasing function in quality, 
	so the upper bound of the worker's quality at least reflects the well-deserved reward of the worker. As a result, any cheating requester has to pay at least as much as the honest requester.
	\item Another major difference is   the relaxed \emph{{special zero-knowledge}}: $\PoQoEA$ is zero-knowledge, only if $|G|$  and $\mathsf{range}$ are   small constants, so  anything simulatable by the gold standards can be leaked.
	Nevertheless, the  conditions are prevalent in the special context of HITs \cite{feifei,ng,video,Imagenet,ABI13, sdhc, mturkbotpanic,PWC15,mturkgolden,SZ15,imagenet-details,mturkbot,turkopticon,MCN16}. Recall that  $G$  represents the  few golden standard questions,  and   $\mathsf{range}$   means the few options of each  question in HITs, indicating that both are small constants in reality.
\end{itemize} 

In sum, even though  $\PoQoEA$ is seemingly over-tuned,
it essentially coincides with the generic zk-proof of the quality of encrypted answers  in the context of   HITs.

\smallskip
\noindent{\bf Construction and security analysis.} Here is an efficiency-driven way to constructing  $\PoQoEA$  for the quality function $\reward(\bm{a}_j; G, G_s)$ that was defined in \S \ref{sec:problem}.
We can reduce the problem to the standard notion of verifiable decryption.
More precisely, 
given the established  $\VPKE$ scheme $(\Enc_h, \Dec_k, \Prove_k, \Verify_h)$,
$\PoQoEA$ can be constructed as illustrated in Fig \ref{fig:poqoea}.

\begin{figure}[!htb] 
	
	 \vspace{-3mm}
	
	\begin{tikzpicture}[x=0.75pt,y=0.75pt,yscale=-1,xscale=1]
	
	\draw   (100,95) -- (430,95) -- (430,295) -- (100,295) -- cycle ;

	\draw    (160,200) -- (360,200) ;
	\draw [shift={(360,200)}, rotate = 180] [color={rgb, 255:red, 0; green, 0; blue, 0 }  ][line width=0.75]    (7,-3) .. controls (6,-1) and (3,-0.3) .. (0,0) .. controls (3,0.3) and (6,1) .. (7,3)   ;

	\draw (160,125) node    {$\cProve(\vec{x},k)$};
	\draw (377,125) node    {$\cVerify(\vec{x})$};

	\draw (265,105) node   [font=\footnotesize] {Public knowledge $\vec{x}$: $G$, $G_{s}=\{s_i\}_{i\in G}$, $\chi$, ${\bf c}=\langle c_{1} \dots c_{N} \rangle$, $h$};
	\draw    (110,113) -- (420,113) ;
	
	\draw (175,170) node   [font=\footnotesize][align=left] {
		$\pi$ $\leftarrow$ $\emptyset$\\
		for each $i$ in $G$:\\
		\hspace{5mm} $(a_i, \pi_i) \leftarrow \Prove_k(c) $\\
		\hspace{5mm} if $a_i \ne s_i$: \\
		\hspace{10mm} $\pi \leftarrow \pi \cup (i,a_i,\pi_i)$ \\
		output $\pi$};
	\draw (355,250) node   [font=\footnotesize][align=left] {for each $(i,a_i,\pi_i)$ in $\pi$:\\
		\hspace{5mm} if $a_{i} \equiv s_i$: \\
		\hspace{10mm} output 0 \\
		\hspace{5mm} if $\neg\Verify_h(a_i,c_{i},\pi)$: \\
		\hspace{10mm} output 0 \\
		\hspace{5mm} $\chi \leftarrow \chi + 1$   \\
		output $1: 0 ?$ $\chi\ge |G|$ };
	\draw (260,196) node   [font=\footnotesize] {$\pi$};

	\end{tikzpicture}
	
	\caption{The construction of $\PoQoEA$  for the quality     defined in \S \ref{sec:problem}.}
	\label{fig:poqoea}
	\vspace{-1.5mm}
\end{figure}

\begin{lemma}
	Given any verifiable public key encryption $\VPKE$,
	the algorithm in Fig \ref{fig:poqoea} satisfies the definition of $\PoQoEA$ regarding the quality function defined in \S \ref{sec:problem}.
\end{lemma}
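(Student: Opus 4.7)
The plan is to prove the three required properties --- completeness, upper-bound soundness, and special zero-knowledge --- one at a time, each by a direct reduction to the corresponding guarantee of the underlying $\VPKE$ scheme. Since the protocol is non-interactive, the ``communication scripts'' collapse to the single message $\pi$ that the prover sends to the verifier.

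For \emph{completeness}, I would simply trace the honest execution. With $\chi=\reward(\bm{a}_j;G,G_s)=\sum_{i\in G}[a_i\equiv s_i]$, the prover deposits into $\pi$ exactly one tuple $(i,a_i,\pi_i)$ per index $i\in G$ with $a_i\neq s_i$, so $|\pi|=|G|-\chi$. By $\VPKE$ completeness each $\Verify_h(a_i,c_i,\pi_i)$ accepts, and $a_i\neq s_i$ holds by construction, so the verifier's counter climbs from $\chi$ to $|G|$ and acceptance follows. For \emph{upper-bound soundness}, I would reduce to $\VPKE$ soundness via a pigeonhole argument. Let $\bm{a}_j=\Dec_k(\bm{c}_j)$ and $w:=|G|-\reward(\bm{a}_j;G,G_s)$ be the true count of wrong golden-standard answers. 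Any $\huaA$ that forces acceptance with $\chi<|G|-w$ must supply $|\pi'|\ge|G|-\chi>w$ accepted entries; treating the entries as keyed by distinct $i\in G$ (which I would make explicit with a distinctness check in the verifier for hygiene, since otherwise padding with duplicates would trivially break soundness), at least one accepted tuple $(i,a_i',\pi_i')$ must satisfy $\Dec_k(c_i)=s_i$ while $a_i'\neq s_i$ and $\Verify_h(a_i',c_i,\pi_i')=1$. That tuple is precisely a $\VPKE$ forgery, so any $\PoQoEA$ adversary yields a $\VPKE$ soundness breaker with the same advantage up to a negligible additive term.

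For \emph{special zero-knowledge}, I would build $\huaS$ by composing an enumeration step with the $\VPKE$ simulator $\huaS_\VPKE$. Because $|G|$ and $|\mathsf{range}|$ are small constants, the set of assignments $(a_i)_{i\in G}\in\mathsf{range}^{|G|}$ that disagree with $G_s$ in exactly $|G|-\chi$ coordinates has constant size; $\huaS$ samples such an assignment following the distribution that the real execution induces and then, for each chosen wrong index $i$, invokes $\huaS_\VPKE(a_i,c_i,h)$ to synthesize $\pi_i$. A standard hybrid argument over the at most $|G|-\chi$ invocations of $\huaS_\VPKE$ yields indistinguishability, with any distinguishing advantage absorbed into the $\VPKE$ zero-knowledge advantage.

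The subtle step --- and what I expect to be the main obstacle --- is precisely this zero-knowledge argument: the real $\pi$ carries the actual values $(a_i)_{i\in G,\,a_i\neq s_i}$ which depend on the secret key $k$, whereas $\huaS$ is only given $(h,G,G_s,\bm{c}_j,\chi)$. The ``special'' clause is what legitimates the gap, by treating this constant-entropy wrong-answer information as admissible public leakage; the proof must make the leakage explicit, either by adjoining $(a_i)_{i\in G}$ to $\huaS$'s inputs as the declared leakage interface or by appealing to the small-constant enumeration to argue that the correct conditional distribution over wrong-answer patterns is efficiently samplable from the public inputs alone. Once this is pinned down, the rest of the argument is a routine hybrid over $\VPKE$ zero-knowledge invocations.
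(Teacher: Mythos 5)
Your proposal follows essentially the same route as the paper's (sketch) proof: completeness by tracing the honest execution, upper-bound soundness by a direct reduction to $\VPKE$ soundness, and special zero-knowledge by exploiting the constant size of $|G|$ and $\mathsf{range}$ to enumerate/sample the leaked golden-standard answers and invoke the $\VPKE$ simulator $\huaS_\VPKE$ for each revealed index. Your version is in fact more careful on two points the paper's sketch glosses over --- the need for a distinctness check on the indices in $\pi$ so that duplicate entries cannot inflate the verifier's counter, and the need to make the golden-standard leakage explicit in the simulator's interface rather than merely asserting it is ``simulatable'' because the domain is constant --- both of which are legitimate refinements rather than departures from the paper's argument.
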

\begin{proof}
{\em (sketch)} 
The completeness is immediate from the definition of quality function, the correctness of encryption, and the completeness of  $\VPKE$.
To prove  the upper-bound soundness, we assume by contradiction to let  an adversary break it, then the adversary can immediately  break the soundness of $\VPKE$. The special zero-knowledge is also clear: considering $|G|$ and the $\mathsf{range}$ of each $a_i$ are constants,
the permutation $\binom{|G|}{\chi}$  would be constant, indicating that there exists a P.P.T. simulator $\huaS$  invoking at most polynomial number of $\huaS_\VPKE$ (on input $c_i$, $h$, and    guessed  $a_i \in \mathsf{range} \setminus \{s_i\}$) to simulate all $\VPKE$ proofs \cite{lindellsimulate}, thus simulating the $\PoQoEA$ proof.
\end{proof}

\ignore{

More precisely, 
if $(\Enc_h, \Dec_k, \Prove_k, \Verify_h)$ is an already-established verified encryption scheme,
  $\PoQoEA$   for $(\Enc_h, \Dec_k)$ can be constructed as follows:
\begin{enumerate}
	\item $\cProve_k({\bf c}_j, G, G_s)$. For each $i \in G$,  decrypt the $i$-th element $c_{(i,j)}$ in ${\bf c}_j$ to obtain $a_{(i,j)} = \Dec_k(c_{(i,j)})$, if $s_i \in G_s$
	
	 $\rightarrow (\chi, \pi)$. 
	\item $\cVerify_h({\bf c}_j, \chi, \pi, G, G_s)$.
	 $\rightarrow 0/1$. 
\end{enumerate}

\noindent{\bf Intuitions.} 
Note that secure yet practical decentralization based on blockchain is quite non-trivial, due to the restrictions of the current ``handicapped'' smart contract:
(i) it leaks all inputs and computing results to everyone including the adversary, 
(ii) it consults the adversary to reorder the input messages received during per each clock period, 
and worse still, 
(iii) it has very limited computing power and therefore cannot support heavy computations.
As a result, we have to dedicatedly design the protocol and the corresponding smart contract to go beyond these above inherent challenges.

First, to efficiently go beyond the blockchain transparency without sacrificing functionalities, the intuition is to design concretely efficient proofs attesting the well-deserved rewards of encrypted answers.
To this end, we conduct dedicated statement reformations of the incentive mechanism to reduce the intrinsic complexities of relevant proof-of-knowledge, which are (i) to prove the upper bound of the well-deserved reward instead of proving the exact number, and (ii) to allow some admissible leakage of answers posteriorly instead of enforcing zero-knowledge.
What is more, we avoid to use the heavy generic zero-knowledge proof-of-knowledge tools to avoid the high cost of generality, and design concretely efficient proof-of-knowledge protocol regarding to the optimized statement.

Second, to practically mitigate the adversarial reordering of message deliveries, we let the answer submission consist of two subsequent sub-phases, one of which is to commit the ciphertext, and the other one of which is to reveal the commitment of ciphertext. The intuition is that an honest worker will not reveal his ciphertext, if he realizes that his commitment is not accepted (due to adversarial reordering of messages). That guarantees the high-security assurance against strong network attackers, and can be ``simulated'' as to submit answers to the ideal-world functionality in an asynchronous network. Remark that all state-of-the-art studies of securing decentralized HITs fail in such a rigorous security model, for example, the authors of \cite{zebralancer-full} cannot capture these adversaries when the requester is also corrupted.
}



\ignore{
\medskip
\noindent{\bf Intuitions.} Our basic strategy is to take advantage of the smart contract to enforce the special fair exchange between the crowd-shared data and the corresponding rewards, with leveraging the \emph{outsource-then-prove methodology} \cite{zebralancer} to resolve the inherent tension between blockchain transparency and data privacy, in brief which works as: (i) all workers will submit their answers encrypted under the requester's public key, through a smart contract that pre-specifies the incentive mechanism (including the cryptographic commitments of the requester's gold-standard challenges) along with the requester's budget; and (ii) the requester will later send the contract the non-interactive zero-knowledge proofs to instruct how to correctly reward per each worker, according to their accuracy on the gold-standard challenges (otherwise, every worker just receives a pre-specified payment from the contract).

\smallskip
At first glance, it is enticing to take advantage of the generic NIZK proofs for any NP-languages, e.g. zk-SNARKs, as a black box to simply instantiate the above \emph{outsource-then-prove} idea in a broad variety of incentive mechanisms. However, as we briefly discussed before, although zk-SNARKs can grant the critical efficiency of verifying zk-proofs for on-chain purposes, huge space and time complexities are sometimes placed at the prover side, mainly because of the prohibitive obstacles raised by the underlying NP-reductions \cite{arya}. As such, the requester of crowdsourcing might suffer from proving complex statements to attest the well-deserved rewards of each workers due to zk-SNARK proving, which leads up to critical off-chain feasibility issue.
That severe feasibility problem of using the generic NIZK proofs calls a urgent demand to a specially designed efficient protocol, that is  particularly optimized for a simply useful incentive mechanism (e.g. the widely-used one based on golden standards).
%

\smallskip
On the other hand, one might also observe a more  \emph{trivial solution}  particularly for the incentive mechanism based on gold-standards:  the requester generates and manages a private-public key pair corresponding to per each question in her tasks, and release all public keys to allow the workers encrypted the answers of different questions under different public keys, such that the requester can later selectively release the private keys of her gold-standard challenges. But in the crowdsourcing use-case, there typically are hundreds of tasks, each of which is further composed by hundreds or thousands of questions \cite{imagenet-details,mturkreputation}, and such a naive solution requires each requester to manage ten thousands of keys, which clearly still corresponds cumbersome off-chain burden at the requester end again. 

\smallskip
Though the above \emph{naive solution} is not feasible in practice, it hints us a key observation that: an efficient protocol for the widespread gold-standards based incentive can be realized, once we can have a concretely efficient way to ``decrypt'' without revealing the private key. Specifically, we will catch up this idea by a concrete  NIZK proof that is dedicatedly constructed via $\Sigma$-protocol and Fiat-Shamir transform in random oracle model.
First, we observe that, to attest each worker's well-deserved reward, the requester essentially only has to disclose the workers' submissions bound the ciphertexts that are submitted to the gold-standard challenges,
without revealing her private key.
Second, we carefully translate the above intuitive requirement into an algebraic statement that is efficiently provable through $\Sigma$-protocol \cite{sigma}, especially after we adapt the additive homographic variant of ElGamal encryption, which is friendly for proving the type of the knowledge of discrete logarithms. Last, we transform such a special designed $\Sigma$-protocol from a three-move interactive protocol into a NIZK proof via Fiat-Shamir transform, which turns out an extremely efficient NIZK construction as its simple nature stemming from the underlying $\Sigma$-protocol.

}

\begin{figure}[!htb] 
	\fcolorbox{black}{Azure}{
		\parbox{.95\linewidth}{
			\begin{center}
				\large{\bf The HITs contract functionality \color{black}{$\Contract_{hit}^{\mathcal{L}}$}} 
			\end{center}
			\small
			Given accesses  to $\mathcal{L}$,  $\Contract_{hit}$ interacts with   $\huaR$,   $\{\huaW_j\}$,
			and  $\huaA$. 	
			
			\hrulefill{} {\bf Phase 1: Publish Task} \hrulefill{}
			
			\begin{itemize}[leftmargin=0.3cm]
				\item Upon receiving   $(\mathsf{publish}, N, \bitcoin, K, \mathsf{range},  {\Theta}, h, \comm_{gs})$ from $\huaR$, leak the message and $\huaR$ to $\huaA$,
				{\color{blue} until the beginning of next clock, proceed with the delayed executions down below}: 
				\begin{itemize}[leftmargin=0.3cm]
					\item  
					send $(\mathsf{freeze}, \mathcal{P}_i, \bitcoin)$ to $\huaL$,
					if returns $(\mathsf{frozen}, \mathcal{F}_{hit}^{\mathcal{L}}, \mathcal{P}_i, \bitcoin)$:
					\begin{itemize}[leftmargin=0.3cm]
						\item store $N$, $\bitcoin$, $K$, $\mathsf{range}$,  ${\Theta}$, $h$   and $\comm_{gs}$
						\item initialize $\mathsf{answers}$  $\leftarrow$ $\emptyset$, $\mathsf{comms}$ $\leftarrow$ $\emptyset$
						\item send $(\mathsf{published}, \huaR, N, \bitcoin, K, \mathsf{range},  {\Theta}, h, \comm_{gs})$ to all entities, and goto phase 2-a
					\end{itemize}
				\end{itemize}
			\end{itemize}

			\hrulefill{} {\bf Phase 2-a:  Collect Answers (Commit phase)} \hrulefill{}
			\begin{itemize}[leftmargin=0.3cm]
				\item Upon receiving    $(\mathsf{commit}, \comm_{\bm{c}_j})$ from $\huaW_j$, 
				leak the message and $\huaW_j$ to $\huaA$,
				{\color{blue} then proceed with the following delayed executions until the beginning of next clock}, with consulting $\huaA$ to re-order all received $\mathsf{commit}$ messages: 
				\begin{itemize}[leftmargin=0.3cm]
					\item for each received  $\mathsf{commit}$ message (sent from $\huaW_j$):
					\begin{itemize}[leftmargin=0.3cm]
						\item if $(\huaW_j, \cdot) \notin \mathsf{comms}$ and $(\cdot, \comm_{\bm{c}_j}) \notin \mathsf{comms}$:
						\begin{itemize}[leftmargin=0.3cm]
							\item let $\mathsf{comms} \leftarrow \mathsf{comms} \cup (\huaW_j, \comm_{\bm{c}_j})$
							\item if $|\mathsf{comms}| = K$, send $(\mathsf{committed},\mathsf{comms})$ to all entities, and goto the $\mathsf{reveal}$  phase
						\end{itemize}
					\end{itemize}
				\end{itemize}
				
			\end{itemize}

			\hrulefill{} {\bf Phase 2-b:  Collect Answers (Reveal phase)} \hrulefill{}
			\begin{itemize}[leftmargin=0.3cm]
				\item Upon entering this phase, leak all received messages and their senders to $\huaA$, {\color{blue}  till the next clock period, proceed as}: 
				\begin{itemize}[leftmargin=0.3cm]
					\item for   each $\mathcal{W}_j \in \{\mathcal{W}_j \mid   (\huaW_j,\cdot) \in \mathsf{comms}\}$:
					\begin{itemize}[leftmargin=0.3cm]
						\item if receiving the message $(\mathsf{reveal}, \bm{c}_j, \mathsf{key}_j)$ from $\huaW_j$ such that  $\open(\comm_{\bm{c}_j}, \bm{c}_j, \mathsf{key}_j)=1$:
						\begin{itemize}[leftmargin=0.3cm]
							\item  $\mathsf{answers}$ $\leftarrow$ $\mathsf{answers} \cup (\huaW_j,\bm{c}_j)$
						\end{itemize}
						\item else $\mathsf{answers}$ $\leftarrow$ $\mathsf{answers} \cup (\huaW_j,\bot)$
					\end{itemize}
					\item 
					send $(\mathsf{revealed},\mathsf{answers})$ to all,  and  goto the next  phase 
				\end{itemize}
			\end{itemize}
			
			\hrulefill{} {\bf Phase 3: Evaluate Answers} \hrulefill{}
			\begin{itemize}[leftmargin=0.3cm]
				\item Upon entering this phase, leak all received messages and their senders to $\huaA$, {\color{blue}  till the next clock period, proceed as}: 
				
				\begin{itemize}[leftmargin=0.3cm]
					\item if receiving $(\mathsf{golden}, G, G_s, \mathsf{key}_{gs})$  from   $\huaR$, such that $\open(\mathsf{comms}_{gs},G||G_s,\mathsf{key}_{gs})=1$:
					\begin{itemize}[leftmargin=0.3cm]
						\item for   each $\mathcal{W}_j \in \{\mathcal{W}_j \mid   (\huaW_j,\cdot) \in \mathsf{answers}\}$:
						\begin{itemize}[leftmargin=0.3cm]
							\item if receiving $(\mathsf{outrange}, \mathcal{W}_j, i, a_{(i,j)}, \pi_i)$   from $\huaR$:
							\begin{itemize}[leftmargin=0.3cm]
								\item[]  send $(\mathsf{pay}, \mathcal{W}_j, \bitcoin/K)$ to $\huaL$, if    $a_{(i,j)} \in \mathsf{range}$ or $\Verify_h(a_{(i,j)}, c_{(i,j)},  \pi_i) = 0$
							\end{itemize}
							\item else if receiving   $(\mathsf{evaluate}, \mathcal{W}_j, \chi_j, \pi)$  from $\huaR$:
							\begin{itemize}[leftmargin=0.3cm]
								\item[]  send $(\mathsf{pay}, \mathcal{W}_j, \bitcoin/K)$ to $\huaL$, if  $\chi_j \ge \Theta$ or $\cVerify_h({\bf c}_j, \chi_j, \pi, G, G_s) = 0$
							\end{itemize}
							\item else if $\bm{c}_j \ne\bot$, send $(\mathsf{pay}, \mathcal{W}_j, \bitcoin/K)$ to $\huaL$

						\end{itemize}
					\end{itemize}
					\item otherwise, for each $\mathcal{W}_j \in \{\mathcal{W}_j \mid   (\huaW_j,\cdot) \in \mathsf{answers}\}$, send $(\mathsf{pay}, \mathcal{W}_j, \bitcoin/K)$ to $\huaL$
				\end{itemize}
				
			\end{itemize}
			
		}
	}
	\caption{The ideal functionality of the (stateful) HITs contract.}
	\label{fig:contract}
	 \vspace{-4mm}
\end{figure}

\subsection{HIT contract   and HIT protocol}
\label{sec:contract}

Now we are ready to present our concretely efficient decentralized  protocol $\Pi_{hit}$ for HIT.
%
Our design centers around a smart contract $\Contract_{hit}^\huaL$,
which is formally described in Fig \ref{fig:contract}. The contract  $\Contract_{hit}^\huaL$ is the crux to take best advantage of the rather limited abilities of blockchain to make our protocol securely realize the ideal functionality $\huaF_{hit}^\huaL$.
Thus given contract $\Contract_{hit}^\huaL$, our HITs protocol  $\Pi_{hit}$ can be defined  among the requester, the worker and the contract, as formally illustrated in Fig \ref{fig:protocol}.
%
Informally, our HIT protocol $\Pi_{hit}$   proceeds as follows:
\begin{enumerate}
	\item  \underline{\smash{\emph{Publish task}}}. The requester $\huaR$ announces her public key $h$, and publishes a   task  $\huaT$  of $N$ multi-choice questions to crowdsource $K$ answers for the task.  Each question in  $\huaT$ is  specified to have some options in  $\mathsf{range}$. 
	The task    mixes some golden standard questions, whose indexes $G$ and ground truth $G_s$ are committed to $\comm_{gs}$.
	Also, $\huaR$  places $\bitcoin$ as deposit to cover her budget, which promises that a worker would get a reward of $\bitcoin/K$, if submitting an answer   beyond a specified quality standard $\Theta$.
	\item  \underline{\smash{\emph{Commit answers}}}. Once the task is published, the workers can commit their answers (encrypted to the requester) in the task. To prevent against copy-and-paste attacks, duplicated commitments are rejected. The contract moves   to the next phase, once $K$ distinct workers commit.
	
	\item  \underline{\smash{\emph{Reveal answers}}}. After $K$   workers commit their answers, these workers can start to   reveal their answers
	in form of ciphertexts   encrypted to the requester.
	Note  that the submissions of answers explicitly contain  two subphases, namely, committing and revealing, 
	which is the crux to prevent the network adversary from taking advantages by adversarially scheduling the order of submissions.
	
	\item  \underline{\smash{\emph{Evaluate answers}}}. Eventually,  the requester is supposed to instruct the blockchain  to correctly pay these encrypted answers to facilitate the critical fairness. 
	To this end, the protocol leverages our novel notion of $\PoQoEA$.	So the requester can efficiently prove to the contract to reject a certain answer, if the worker  does not meet the pre-specified quality standard $\Theta$.
	If an answer is out of the  specified   $\mathsf{range}$, the requester   is   allowed to use verifiable encryption $\VPKE$ to reveal that to reject payment.
\end{enumerate}

\underline{\smash{\emph{Remark}}}.  $\Contract_{hit}^{\mathcal{L}}$ captures the essence of smart contracts \cite{Woo14} in reality, as it:
(i)  reflects the transparency of Turing-complete smart contract that is a stateful program handling pre-specified tasks publicly;
(ii)  captures   a     contract that can access   the cryptocurrency ledger to honestly    deal with conditional payments;
(iii)  models  the    network adversary who is consulted to schedule the delivering order of  so-far-undelivered messages.

\begin{figure}
	\fcolorbox{black}{Ivory1}{
		\parbox{.95\linewidth}{
			\begin{center}
				\large{\bf The protocol of   HITs \color{black}{$\Pi_{hit}$}} 
			\end{center}	
		\small			
			 $\Pi_{hit}$ is among the requester $\huaR$, the workers $\{\huaW_j\}$ and     $\Contract_{hit}$
			
			\hrulefill{} {\bf Phase 1: Publish Task} \hrulefill{}
			\begin{itemize}[leftmargin=0.3cm]
				\item Requester $\huaR$: 
				\begin{itemize}[leftmargin=0.3cm]
					\item  $(\Enc_h, \Dec_k) \leftarrow \KGen(1^\lambda)$				
					\item Upon receiving the parameters $G$, $G_s$, $\Theta$, $N$, $\mathsf{range}$,  $\bitcoin$, $K$ of a HIT to publish:
					\begin{itemize}[leftmargin=0.3cm]
						\item $\mathsf{key}_{sg}$ $\overset{\$}{\leftarrow}$ $\{0,1\}^\lambda$
						\item $\comm_{gs} \leftarrow \commit( G||G_s, \mathsf{key}_{sg})$ 
						\item send $(\mathsf{publish}, N, \bitcoin, K, \mathsf{range},  {\Theta}, h, \comm_{gs})$ to $\Contract_{hit}$
					\end{itemize}					 
				\end{itemize}
			\end{itemize}
			
			\hrulefill{} {\bf Phase 2:  Collect Answers} \hrulefill{}
			\begin{itemize}[leftmargin=0.3cm]
				\item Worker $\huaW_j$: 
				\begin{itemize}[leftmargin=0.3cm]
					\item Upon receiving $(\mathsf{published}, \huaR, N, \bitcoin, K, \mathsf{range},  {\Theta}, h, \comm_{gs})$ from $\Contract_{hit}$:
					\begin{itemize}[leftmargin=0.3cm]
						\item  get   the answer  $\bm{a}_j  = (a_{(1,j)},\cdots,a_{(N,j)})$
						\item $\bm{c}_j \leftarrow (\Enc_h(a_{(1,j)}),\cdots,\Enc_h(a_{(1,N)}))$
						\item $\comm_{\bm{c}_j} \leftarrow \commit(\bm{c}_j,\mathsf{key}_{j})$, where $\mathsf{key}_{j}$ $\overset{\$}{\leftarrow}$ $\{0,1\}^\lambda$
						\item send $(\mathsf{commit}, \comm_{\bm{c}_j})$ to $\Contract_{hit}$
					\end{itemize} 
					\item Upon receiving $(\mathsf{committed},\mathsf{comms})$ from $\Contract_{hit}$:
					\begin{itemize}[leftmargin=0.3cm]
						\item if $(\huaW_j,\cdot) \in \mathsf{comms}$, send $(\mathsf{reveal}, \bm{c}_j, \mathsf{key}_j)$ to $\Contract_{hit}$
					\end{itemize}
				\end{itemize}
			\end{itemize}
			
			\hrulefill{} {\bf Phase 3: Evaluate Answers} \hrulefill{}
			\begin{itemize}[leftmargin=0.3cm]
				\item Requester $\huaR$:
				\begin{itemize}[leftmargin=0.3cm]
					\item Upon receiving $(\mathsf{revealed},\mathsf{answers})$ from $\Contract_{hit}$:
					\begin{itemize}[leftmargin=0.3cm]
						\item send $(\mathsf{golden}, G, G_s, \mathsf{key}_{gs})$  to   $\huaR$
						\item for each $(\huaW_j, \bm{c}_j)$ $\in$ $\mathsf{answers}$:
						\begin{itemize}[leftmargin=0.3cm]
							\item decrypt each item in $\bm{c}_j$ to get  $\bm{a}_j=(a_{(1,j)},\cdots,a_{(N,j)})$
							\item if $\exists a_{(i,j)} \in \bm{a}_j$ s.t. $a_{(i,j)} \notin \mathsf{range}$:
							\begin{itemize}
								\item $ (a_{(i,j)}, \pi_i) \leftarrow \Prove_k(c_{(i,j)})$
								\item send $(\mathsf{outrange}, \mathcal{W}_j, i, a_{(i,j)}, \pi_i)$ to $\Contract_{hit}$
							\end{itemize}
							\item else if $\chi_j = \reward(\Dec(\bm{c}_j,sk_\huaR);G,G_s) < \Theta$:
							\begin{itemize}
								\item $\pi \leftarrow \cProve_k({\bf c}_j, \chi_j, G, G_s)$
								\item send $(\mathsf{evaluate}, \mathcal{W}_j, \chi_j, \pi)$ to   $\Contract_{hit}$
							\end{itemize}
	
						\end{itemize}

					\end{itemize}

				\end{itemize}
			\end{itemize}
		}
	}
	\caption{The formal description of the decentralized HITs protocol $\Pi_{hit}$.}
	\label{fig:protocol}
	\vspace{-3.3mm}
\end{figure}

\ignore{

\begin{itemize}	
	
	\smallskip
	\item $\mathsf {TaskPublish}$. {\em A requester publishes a smart contract to announce a well-defined crowdsourcing task, along with promising her incentive mechanism (including the commitments of gold-standards) via the blockchain}.

	\smallskip
	When the requester $\huaR$ is posting a crowdsourcing task, she codes a smart contract $\Contract$ containing the specifications of her task and her incentive policy. Then the requester will deploy $\Contract$ in the blockchain, along with placing a deposit (that is larger her budget $\tau$) and publishing her public key $h:=g^k$ (where $k$ is her private key). Then, $\Contract$ gets an immutable blockchain address $\alpha_\Contract$ to hold the budget and interact with anyone later.\footnotemark{} Remark that each requester only necessarily generates and maintains one private-public key pair through all her crowdsourcing tasks.
	
	\footnotetext{We emphasize that $\alpha_\Contract$ will be unique per each contract. In practice, $\alpha_\Contract$ can be computed via $\hash(\alpha_\huaR||counter)$, where $\hash()$ is a secure hash function, and $counter$ is governed by the blockchain to be increased by exact one for each contract created by the blockchain address $\alpha_\huaR$.}
		
	\smallskip
	The task is specified to ask $K$ workers to answer a set of $N$ choice questions, the possible answers of which are in a give range $[0:\overline{m}]$. As an particular example, the task is design to ask whether some locations in an urban area have vacant off-street parking spots \cite{CrowdPark} or whether some images contain a rabbit \cite{feifei}. Also, the task specifies $M$ gold-standards are mixed within the $N$ questions, and the incentive policy promises to pay every worker, iff he can reach a specified accuracy rate on these gold-standard questions \cite{feifei}.
	
	\smallskip
	In addition, the requester has to commit the indexes of gold-standard questions $G$ and  their corresponding solutions $\{s_i\}_{i \in G}$, i.e., the  contract $\Contract$ will record $\comm_1:=\commit(G,k_1)$ and $\comm_2:=\commit(\{s_i\}_{i \in G},k_2)$, where $k_1$ and $k_2$ are the keys of commitments. Note these commitments are not required to be non-malleable.

	\smallskip
	\item  $\mathsf{Submissions Commit}$. {\em The workers encrypt their answers under the requester's public key, and then send the cryptographic hashes of these ciphertexts to the contract.}
	
	\smallskip
	If a  worker $\huaW_j$ is interested in the task, he first checks the contract content (e.g., reviewing the task specifications and incentive policy), and then encrypts each of his answers $\bm{a}_j  := (a_{(1,j)},\cdots,a_{(N,j)})$ under the requester's public key with using the additive homomorphic ElGamal encryption, to obtain the ciphertexts $\bm{c}_j := (c_{(1,j)},\cdots,c_{(N,j)})$, where $c_{(i,j)}:= (c_{(i,j)}^1,c_{(i,j)}^2) = (g^r,g^{a_{(i,j)}} \cdot h^r)$ is the ciphertext of his answer to the $i$-th question in the task. Then he will submit
$\hash(\bm{c}_j)$, i.e., the hash of his ciphertexts, to the contract.

	\smallskip
	The contract $\Contract$ keeps collecting the ciphertexts' hashes, until it receives $K$ hashes from $K$ different workers (or a pre-specified deadline passes). The contract also records the blockchain address $\alpha_j$ of each worker $\huaW_j$ who commits $\hash(\bm{c}_j)$. (In case the contract receives one or more duplicated hashes, it drops all the same hashes).
	
	\smallskip
	Remark this $\mathsf{Submissions Commit}$ step is needed, if the workers have no tolerance on an attacker who copies their submissions to free-ride, which is a cheap alternative different from the solution used in Lu. \emph{et al.} \cite{zebralancer-full}, where they require: the workers sign their plaintext answers, and then encrypt the signed answers; in case the requester decrypts an encrypted submission to get an invalidly signed plaintext, she generates zk-SNARK proof to attest such an invalidity to reject this submission. Also, we remark there is possible a trade-off between communication costs and security requirements, that means a worker can also directly submits his encrypted answers without committing, such that one communication round can be saved, while a minor risk, that his ciphertexts is copied by a malicious worker might raise. 
	
	\medskip
	\item  $\mathsf{Submissions Collection}$. {\em The workers will submit the encrypted answers to the contract, such that these ciphertexts can be collected by the requester via the blockchain.}
	
	\smallskip
	Upon the completion of the $\mathsf{Submissions Commit}$ step, the workers can start to submit their encrypted submissions, if their hashes have been recorded by the contract. The contract will verify the submitted ciphertexts correctly bound to the recorded hashes. If a worker fails to submit answers within a pre-defined deadline after committing hashes, he will get no payment.
	
	\smallskip
	Note that the workers have an option to skip the $\mathsf{Submissions Commit}$ phase and directly submit their encrypted submissions, if they prefer the save on the communication round, with tolerating a risk that his ciphertext is ``stolen''.
	Remark that an encrypted submission will be ignored by the contract, if it is exact same to any other submission that has been solicited, which is required to ensure that if a worker skips $\mathsf{Submissions Commit}$ phase, the malicious workers can only copy his encrypted submission to submit one time, that means from the perspective of the requester side, the workers still submit ``independently'', as no submission can be copied more than once.
	
	\smallskip
	\item $\mathsf {Reward}$. {\em The requester opens the commitments of gold-standards, and then proves to the contract that each worker answers which gold-standard incorrectly, such that the well-deserved payment of each worker can be faithfully computed and fulfilled by the contract.}

	\smallskip
	The requester $\huaR$ keeps listening to the blockchain, and once $\Contract$ collects $n$ valid openings for the commitments of encrypted answers (or when a pre-defined deadline passes), she retrieves all the ciphertexts, and decrypts them to obtain the corresponding answers $\bm{a}_1,\ldots, \bm{a}_K$.	
	
	\smallskip
	Note that the decryption can be done, because each answer shall be in the small range $[0:\overline{m}]$  corresponding the possible choices of per each question, which hints the requester can first compute $g^{a_{(i,j)}}$ with a given ciphertext $c_{(i,j)}$ and then look through all values in $[0:\overline{m}]$ to get the correct $a_{(i,j)}$. In case  $\huaW_j$ submits a ciphertext $c_{(i,j)}$ that cannot be decrypted in the specified range, the requester simply submits $g^{a_{(i,j)}}$ to the contract with attaching a NIZK proof to attest $(g^{a_{(i,j)}}, c_{(i,j)}^1,c_{(i,j)}^2,h,g) \in \huaL_1 := \{ (g^m,c_1,c_2,h,g) | \exists k, s.t.,  g^m \equiv \frac{c_2}{c_1^k} \wedge	 h \equiv g^k  \}$, then the contract can verify the proof and check $g^{a_{(i,j)}} {\not\in} \{g^m\} _ {m \in [0:\overline{m}]}$ to reject the submission of $\huaW_j$.
	
	\smallskip
	Next, the requester will open the commitment of gold-standards' indexes $G$, and also open the commitment of their ground-truth solutions $\{s_i\}_{i \in G}$ in the contract. Moreover, the requester will reveal $\{ a_{(i,j)} \} _{i \in G, j \in [1:K]}$ to the contract, i.e. leak the answers of the gold-standard questions. Note that to convince the contract each $a_{(i,j)} \in \{ a_{(i,j)} \} _{i \in G, j \in [1:K]}$ without revealing the private key, the requester shall generate NIZK proof $\pi$ for ${a_{(i,j)}}$ to attest the statement that $({a_{(i,j)}},c_{(i,j)}^1,c_{(i,j)}^2,h,g) \in \huaL_2 := \{ (m,c_1,c_2,h,g) | \exists k, s.t.,  g^m \equiv \frac{c_2}{c_1^k} \wedge	 h \equiv g^k  \}$, i.e., to prove a given plaintext is bound to a particular ciphertext. 
	
	\smallskip
	Finally, the smart contract can compute and faithfully enforce the well-deserved payment of each worker $\huaW_j$, by a simple incentive policy defining his reward $r_j = \reward( \{a_{(i,j)}\}_{i \in G};G,\{s_i\}_{i \in G},\tau)$.
	If the smart contract cannot open the commitments of gold-standards or cannot to verify the correct NIZK proofs to reveal $\{ a_{(i,j)} \} _{i \in G, j \in [1:K]}$ after a pre-specified deadline passes, the smart contract just pays all workers as defined by the incentive policy.

\end{itemize}
\smallskip
{\bf NIZK proofs.} The concrete NIZK proof for the languages $\huaL_1$ and $\huaL_2$ can be efficiently established through $\Sigma$-protocol plus Fiat-Shamir transform \cite{sigma,fiatshamir}. Detailedly speaking, the prover outputs the proof $\pi := (a,z)$, where $a=c_1^r $ and $z=r+k\cdot\hash(a||g||h)$.  For the verifier, on receiving the proof $\pi = (a,z)$, the message $m$ (or $g^m$) and the ciphertexts $c_1$ and $c_2$, it firstly computes the challenge $c=\hash(a||g||h)$, and then checks $g^{m \cdot c} \cdot c_1^z \overset{?}{\equiv} a \cdot c_2^c $, which completes the NIZK verifications. Note that $\hash()$ is a cryptographic hash function modeled as random oracle that takes over the choose of the challenges.

}

\subsection{Instantiating cryptographic building blocks}
For sake of completeness, we  hereafter give  the constructions of   cryptographic building blocks.
Let  $\huaG= \langle g \rangle$ be  a cyclic group of prime order $p$, where $g$ is a random  generator of $\huaG$.

\smallskip
\underline{\smash{\emph{(Short $\mathsf{range}$) verifiable decryption}}} is   based on exponential ElGamal. The private key $k \overset{\$}{\leftarrow} \mathbb{Z}_p$, the public key $h = g^k$, the encryption     $\Enc_h(m)=(c_1,c_2) =(g^r,g^m{h}^r)$,  and the decryption   $\Dec_k((c_1, c_2))= \log(c_2 / c_1^k)$  where $\log$ is to brute-force the short plaintext $\mathsf{range}$ to obtain $m$; if decryption fails to output  $m\in\mathsf{range}$, then $c_2 / c_1^k$ is returned.
%
In addition, to efficiently augment the above $(\Enc_h, \Dec_k)$  to be {\em verifiable}, we adopt a variant of Schnorr protocol \cite{Sch89} (for Diffie-Hellman tuples) with Fiat-Shamir transform  in the random oracle model. In detail,
\begin{itemize} 
	\item $\Prove_k((c_1,c_2))$. Run $\Dec_k((c_1, c_2))$ to obtain $m \in \mathsf{range}$ (or $g^m$ if $m \notin \mathsf{range}$). Let $x \overset{\$}{\leftarrow} \{0,1\}^{\lambda}$. Compute $A=c_1^x$, $B=g^x$, $C=\huaH(A||B||g||h||c_1||c_2||g^m)$,  $Z=x+kC$, and $\pi=(A,B,Z)$.  If $m \in \mathsf{range}$, output $(m, \pi)$; else, output $(g^m, \pi)$.
	\item $\Verify_h(M,(c_1,c_2),\pi)$. Parse $\pi=(A,B,Z)$. If $M \in \mathsf{range}$, compute   $C'=\huaH(A||B||g||h||c_1||c_2||g^M)$,  and then verify $g^{M \cdot C'} \cdot c_1^Z  {\equiv} A \cdot c_2^{C'}$ and $g^Z \equiv B\cdot h^{C'}$, output 1 if the verification passes and 0 otherwise; else if $M \in \huaG$, compute  $C'=\huaH(A||B||g||h||c_1||c_2||M)$  and  verify $M^{C'} \cdot c_1^Z  {\equiv} A \cdot c_2^{C'} \wedge g^Z \equiv B\cdot h^{C'}$, output 1 iff the verification passes and 0 otherwise.
\end{itemize}

\smallskip
\underline{\smash{\emph{Proof of   quality of encrypted answer}}} is built by invoking the  above $\VPKE$ construction in a black-box manner, due to our reduction from $\PoQoEA$  to $\VPKE$ in \S \ref{sec:poqoea}.

\smallskip
\underline{\smash{\emph{Commitment scheme}}}
is   instantiated  according to the well-known efficient folklore construction in the random oracle model \cite{RO,fairswap}:
	(i) $\commit(\mathsf{msg},\mathsf{key}) = \huaH(\mathsf{msg}||\mathsf{key})$;  
	(ii) $\open(\comm,\mathsf{msg}',\mathsf{key}') = [\huaH(\mathsf{msg}'||\mathsf{key}') \equiv \comm]$, where  $[\cdot]$ is Iverson bracket from
	a proposition to  1 (true) or 0 (false).  

\ignore{

\medskip
The {\em zero-knowledge proof of knowledge} for the statement $\PoK\{(sk_\huaR):\Dec(c_{(i,j)},sk_\huaR)=a_{(i,j)}\}$ is instantiated in random oracle model via Fiat-Shamir transform. In details, the above statement can be represented as $\PoK\{(sk_\huaR):  c_1^{sk_\huaR}=c_2'=c_2/g^{a_{(i,j)}}  \}$ due to the concrete exponential ElGamal encryption. That can be efficiently constructed as a variant of Schnorr protocol \cite{Sch89} for Diffie-Hellman tuple:
\begin{itemize}[leftmargin=0.3cm]
	\item $\Prove$. Let $A=c_1^r$, s.t. $\huaH(\mathsf{isPrgrmd}, A||pk_\huaR||g)=0$, where $r \overset{\$}{\leftarrow} \{0,1\}^{\lambda}$; let $C=\huaH(\mathsf{query}, A||pk_\huaR||g)$, and compute $Z=r+sk_\huaR\cdot C$, output $\pi_{(i,j)}:=(A,Z)$.
	\item $\Verify$. Upon receiving  $\pi_{(i,j)}:=(A,Z)$, computes the challenge $C'=\huaH(\mathsf{query}, A||pk_\huaR||g)$, and then verify $g^{a_{(i,j)} \cdot C'} \cdot c_1^Z \overset{?}{\equiv} A \cdot c_2^{C'} $, output 1 iff verification passes.
\end{itemize}
The above protocol is proof-of-knowledge. Moreover, it is also zero-knowledge, as the proof can be simulated by only public knowledge, e.g. the ciphertext $(c_1, c_2)$ and the plaintext $a_{(i,j)}$. We defer the proofs of these conclusions to the full version.

\medskip
Regarding the  {\em proof of knowledge} for $\PoK\{(sk_\huaR):r_j\ge\reward(\Dec(\bm{c}_j,sk_\huaR);pp,sp) \}$,
recall $\reward(\bm{a}_j;pp,sp)$ can be rewritten as $\reward'( {(\sum_{i \in G} a_{(i,j)}  \oplus  s_i)}/{|G|};pp)$,
where $\reward'$ is a decreasing function of ${(\sum_{i \in G} a_{(i,j)}  \oplus  s_i)}/{|G|}$.
Note that for two subsets $G'$ and $G''$ of $G$, if $G'' \subset G'$, let $r_j'=\reward'( {(\sum_{i \in G'} a_{(i,j)}  \oplus  s_i)}/{|G|};pp)$ and let $r_j''=\reward'( {(\sum_{i \in G''} a_{(i,j)}  \oplus  s_i)}/{|G|};pp)$, it is clear that $r_j' \le r_j''$.
Also we not necessarily require the PoK can be simulated with only the ciphertext $\bm{c}_j$, $r_j$, $pp$ and $sp$. Instead, we allow   admissible leakage about plaintexts through the output of a leakage function $\mathsf{leak}(\bm{a}_j,r_j;pp,sp)$, i.e., PoK is simulatable by  $\bm{c}_j$, $r_j$, $pp$, $sp$ and $\mathsf{leak}(\bm{a}_j,r_j;pp,sp)$.

Remark that when we require a more stringent PoK statement that $\PoK\{(sk_\huaR):r_j=\reward(\Dec(\bm{c}_j,sk_\huaR);pp,sp) \}$ and restrict 
the $\mathsf{leak}$ function to output only $\sum_{i \in G' \subset G} a_{(i,j)}  \oplus  s_i$, the construction of the PoK can be approached by or-and composition of Schnorr protocols \cite{Sch89,sigma,CKY09}. Informally speaking,
let denote $e = \sum_{i \in G} a_{(i,j)}  \oplus  s_i$, and denote $IC=\{i|a_{(i,j)} \neq s_i\}_{i\in G}$ and $C=\{i|a_{(i,j)} = s_i\}_{i\in G}$. 
Clearly, $(IC, C)$ is a partition of $G$, and $|IC|=e$ and $|C|=|G|-e$.
There are ${|G|}\choose{e}$ possible binary partitions to assign per each element of set $\{a_{(i,j)}\}_{i\in G}$ into the ``incorrect'' set $IC$ and the ``correct'' set $C$. Out of these partitions, only one corresponds the actual partition $(IC, C)$. So by misuse of notations, the PoK can be expressed as or-and composition of PoKs of Diffie-Hellman tuples, that is 
$\bigvee_{(C,IC)\in \{ \text{binary paritions of } G \text{ s.t. } |IC|=e\}}$
$[\bigwedge_{i\in IC}\PoK\{(sk_\huaR):\Dec(c_{(i,j)},sk_\huaR)=s_i \oplus 1\}]$
$\bigwedge$
$[\bigwedge_{i\in C}\PoK\{(sk_\huaR):\Dec(c_{(i,j)},sk_\huaR)=s_i\}]$.
Note that the above construction of PoK is zero-knowledge (i.e. simulatable by $\bm{c}_j$, $r_j$, $pp$, $sp$), if $\reward'$ is strictly decreasing and $|G|$ is constant. To verify the PoK, one need to verify all underlying $|G|$${|G|}\choose{e}$ Diffie-Hellman tuples, which is too heavy to be practical w.r.t. the limited  computing power of smart contract.

Efficiency-wisely, we are particularly interested in some admissibly less stringent requirements, which are: (i) to prove the upper bound of the well-deserved reward instead of its exactly number, i.e., we require to prove $\PoK\{(sk_\huaR):r_j\ge\reward(\Dec(\bm{c}_j,sk_\huaR);pp,sp) \}$; (ii) to allow the leakage of the answers submitted to gold-standard challenges, i.e., we let $\mathsf{leak}(\bm{a}_j,r_j;pp,sp)$ output $\{a_{(i,j)}\}_{i \in G' \subset G}$ s.t. $\reward( \sum_{i \in G' \subset G} a_{(i,j)}  \oplus  s_i; pp, sp)=r_j$. Conditioned on these, the PoK can be re-constructed efficiently as follows:
\begin{itemize}[leftmargin=0.3cm]
	\item $\Prove$. Choose $G' \subset G$, s.t. $e=\sum_{i \in G'} \Dec(c_{(i,j)},sk_\huaR)  \oplus  s_i$ and $r_j=\reward(e; pp, sp)$, where $r_j$ is the upper bound of the well-deserved reward for the answer submission as $e$ is the low bound of the errors in gold-standard challenges. 
	Denote $(IC,C)$ a partition of $G'$, where $IC=\{i|a_{(i,j)} \neq s_i\}_{i\in G'}$ and $C=\{i|a_{(i,j)} = s_i\}_{i\in G'}$.
	Then, the PoK can be expressed as
	$[\bigwedge_{i\in IC}\PoK\{(sk_\huaR):\Dec(c_{(i,j)},sk_\huaR)=s_i \oplus 1\}]$
	$\bigwedge$
	$[\bigwedge_{i\in C}\PoK\{(sk_\huaR):\Dec(c_{(i,j)},sk_\huaR)=s_i\}]$, which is equivalent to sequentially conduct $\PoK\{(sk_\huaR):\Dec(c_{(i,j)},sk_\huaR)=a_{(i,j)}\}$ for per each $i \in G'$.
	The final proof consists of $G'$ and $\{\pi_{(i,j)}\}_{i \in G'}$. Per each $\pi_{(i,j)}$ corresponds a proof for $\PoK\{(sk_\huaR):\Dec(c_{(i,j)},sk_\huaR)=a_{(i,j)}\}$, which we construct earlier in the subsection.
	\item $\Verify$. Upon receiving $G'$ and $\{\pi_{(i,j)}\}_{i \in G'}$, verify every $\pi_{(i,j)}$ correctly attesting $\Dec(c_{(i,j)},sk_\huaR)=a_{(i,j)}$, then compute $e'=\sum_{i \in G'} a_{(i,j)}  \oplus  s_i$, output $r_j \overset{?}{\equiv} \reward(e';pp,sp)$. Remark that the number of Diffie-Hellman tuples to checks is only $|G'|$, which is $|G|$ in worst case and is a significant enhancement relative to $|G|$${|G|}\choose{e}$.
\end{itemize}
Note that the above protocol satisfies proof-of-knowledge, and the proof is simulatable by $r_j$, $pp$, $sp$, and $\{a_{(i,j)}\}_{i \in G'}$. We again defer the detailed proving to the full version.

}


\subsection{Security analysis}
\label{sec:analysis}

\begin{theorem}
	Conditioned on the hardness of DDH problem and static corruptions, the stand-alone instance of $\Pi_{hit}$ securely realizes $\huaF_{hit}$ in  $\Contract_{hit}^\huaL$-hybrid, random oracle model.
\end{theorem}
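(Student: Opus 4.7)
The plan is to follow the standard simulation-based paradigm: for each admissible static corruption pattern (no corruption, corrupted workers only, corrupted requester only, and jointly corrupted requester plus a subset of workers), I would construct a P.P.T. simulator $\huaS$ in the $\huaF_{hit}^\huaL$-ideal world whose interaction with the environment is computationally indistinguishable from the real execution of $\Pi_{hit}$ against $\huaA$ in the $\Contract_{hit}^\huaL$-hybrid, random-oracle model. Indistinguishability for each case will be proved by a sequence of hybrids that step-wise replace real cryptographic objects (ciphertexts, commitments, and proofs) with simulated ones, each step justified by the DDH assumption, the hiding/binding of the RO-based commitment, or the properties of $\VPKE$/$\PoQoEA$ stated earlier.

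For the case when some workers are corrupted and the requester is honest, $\huaS$ runs $\KGen$ itself, plays the honest requester and $\Contract_{hit}^\huaL$ toward $\huaA$ internally, and simulates every honest worker's submission by encrypting a fixed in-range value (e.g., $0^N$) and committing to that ciphertext; indistinguishability of this step is reduced to semantic security of exponential ElGamal (DDH) and the hiding property of $\commit$. To populate $\huaF_{hit}$ with the adversary's inputs, $\huaS$ waits for $\mathsf{reveal}$ messages of corrupted workers, uses its knowledge of $k$ to decrypt each $\bm{c}_j$ into $\bm{a}_j$, and forwards $(\mathsf{answer},\bm{a}_j)$ to $\huaF_{hit}$. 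The commit-then-reveal structure is essential here: the binding of $\commit$ (in the random oracle model) and the explicit rejection of duplicate commitments in $\Contract_{hit}^\huaL$ ensure that a rushing adversary cannot re-submit an honest worker's ciphertext, so the ``copy-and-paste'' attack that would violate $\huaF_{hit}$'s uniqueness check is ruled out except with negligible probability.

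For the case when the requester is corrupted, $\huaS$ observes the $\mathsf{publish}$ transaction to obtain $h$ and $\comm_{gs}$, then simulates every honest worker's $\mathsf{commit}/\mathsf{reveal}$ by encrypting dummy values in $\mathsf{range}$ under $h$ (again by semantic security) and committing normally. Upon phase 3, $\huaS$ extracts $(G,G_s)$ from the adversary's opening of $\comm_{gs}$ (using binding of the RO commitment) and sends $(\mathsf{publish},\dots,G,G_s)$ up to $\huaF_{hit}$ to initialize the ideal execution retroactively for the honest workers; for each subsequent $(\mathsf{evaluate},\cdot)$ or $(\mathsf{outrange},\cdot)$ message from the corrupted requester, $\huaS$ translates it into the corresponding instruction to $\huaF_{hit}$. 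Soundness of $\PoQoEA$ (hence of $\VPKE$) guarantees that any accepted rejection corresponds to a true upper bound on the worker's quality (resp.\ to a genuinely out-of-range ciphertext), so no honest worker who deserves payment in the ideal world is underpaid in the real world, except with negligible probability; in the reverse direction, when honest workers would deserve payment, $\huaS$ invokes the $\PoQoEA$/$\VPKE$ zero-knowledge simulator $\huaS_\VPKE$ (programming $\huaH$) to produce simulated proofs from only $(h,\bm{c}_j,\chi_j,G,G_s)$.

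The main obstacle is the joint corruption of requester and a subset of workers under a rushing, message-reordering network adversary. Here the hybrid has to (i) simulate honest workers' ciphertexts as dummies while later consistently exposing the parts that correspond to golden standards—this is exactly what the ``special'' zero-knowledge of $\PoQoEA$ was designed to accommodate, since $|G|$ and $|\mathsf{range}|$ are constants and the leaked coordinates are simulatable by $\huaS_\VPKE$—and (ii) synchronize the asynchronous $(\mathsf{answering}\to\mathsf{approved})$ interface of $\huaF_{hit}$ with the commit-then-reveal two-subphase so that an honest worker's input only reaches $\huaF_{hit}$ after its $\mathsf{reveal}$ actually lands in $\Contract_{hit}^\huaL$; this matches the asynchronous leakage pattern in Fig.~\ref{fig:ideal-hit} and is precisely where prior property-based definitions were shown to fail. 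A careful hybrid—first replacing commitments, then ciphertexts one worker at a time, then $\VPKE$ proofs, and finally invoking $\huaS_\VPKE$ for the golden-standard coordinates revealed in $\PoQoEA$—reduces the global distinguishing advantage to a polynomial sum of DDH advantages and $\VPKE$/commitment error terms, each negligible by assumption, completing the proof.
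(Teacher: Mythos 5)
Your proposal follows essentially the same route as the paper's (quite terse) proof sketch: a phase-by-phase simulator that plays $\Contract_{hit}^\huaL$ and the honest parties internally, replaces honest workers' ciphertexts with encryptions of dummy values under DDH/semantic security, uses the hiding/binding of the RO commitment for the commit-then-reveal subphases, extracts corrupted workers' inputs by decrypting with $k$ when the requester is honest, and invokes the soundness and special zero-knowledge of $\VPKE$/$\PoQoEA$ in the evaluation phase. Your write-up is in fact more explicit than the paper's on the hybrid ordering and on why the two-subphase submission neutralizes the rushing adversary, which the paper only gestures at.

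One concrete point to repair: in the corrupted-requester case you propose to ``initialize the ideal execution retroactively'' by sending $(\mathsf{publish},\dots,G,G_S)$ to $\huaF_{hit}$ only after the adversary opens $\comm_{gs}$ in Phase~3. This does not match the interface of $\huaF_{hit}$, which needs $(G,G_S)$ already in Phase~1 (it freezes $\bitcoin$, leaks $|G|,|G_S|$ to the adversary, and must be able to evaluate $\reward(\cdot\,;G,G_S)$ and answer $\mathsf{outrange}$ queries as soon as Phase~3 begins, independently of whether the corrupted requester ever opens the commitment). The standard fix, consistent with the paper's random-oracle instantiation $\comm_{gs}=\huaH(G\|G_S\|\mathsf{key})$, is to have $\huaS$ extract $(G,G_S)$ from the adversary's random-oracle query table at the time the $\mathsf{publish}$ message is produced: any later successful opening of $\comm_{gs}$ to a value not previously queried happens only with negligible probability, so extraction fails only on a negligible-probability event and the simulation remains sound. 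With that substitution your argument goes through and coincides with the paper's.
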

\noindent{\em Proof.} (sketch)
Let $\mathbb{C}$ denote the set of corrupted parties controlled by  the adversary $\huaA$,  and let  $\mathbb{H}$ denote the set of rest honest parties.
For any P.P.T. adversary $\huaA$ in the real world, we can sketch a P.P.T. simulator  $\huaS$  in the ideal world  to interact with the ideal functionality $\huaF_{hit}$ and   corrupted parties, such that   $\huaS$ can  emulate  the actions of  honest parties and the contract $\Contract_{hit}$. 
Detailedly, $\huaS$ proceeds as follows:
\begin{itemize}
	 
	\item \textbf{Publish Task (Phase 1)}.  
	If $\huaR\in\mathbb{C}$, considering that the corrupted $\huaR$ sends the $\mathsf{publish}$ message to $\Contract_{hit}$ in the real world, $\huaS$ can trivially simulate that with interacting with $\huaF_{hit}$. If  $\huaR\in\mathbb{H}$, when  the honest $\huaR$ sends the $\mathsf{publish}$ message to $\huaF_{hit}$, $\huaS$ is informed and thus  allows $\huaS$ to simulate the  real-world scripts of {\em publish task}.
	  
	\item \textbf{Collect Answers (Phase 2)}. 
	In the real world, the P.P.T. adversary $\huaA$ might: (i) corrupt a set of parties $\mathbb{C}$ up to including the requester and a set of the workers, and (ii) is also consulted to reorder the so-far-undelivered messages sent to $\Contract_{hit}$ (till the next clock). 
	
	The basic strategy to emulate $\huaA$ is that: $\huaS$ invokes the adversary $\huaA$ to obtain how $\huaA$ is re-ordering the $\mathsf{commit}$ messages (sent from  workers), let $\mathbb{W}$ to represent the set of workers whose $\mathsf{commit}$ messages are scheduled as the first $K$ to deliver; then $\huaS$ delays all $\mathsf{answer}$ messages that are not sent from the workers in  $\mathbb{W}$. Then, $\huaS$ internally simulates the ciphertexts sent via $\mathsf{reveal}$ messages to open commitments. If $\huaR\in\mathbb{H}$, the ciphertexts can be simulated as they are indistinguishable from the uniform distribution over the ciphertext space; if $\huaR\in\mathbb{C}$,  $\huaS$ is informed about all answer submissions sent from the workers, thus can internally simulate the submissions of the workers in the real world.
	
	Moreover, if   $\huaA$ corrupts a worker whose $\mathsf{commit}$ message is scheduled in the first $K$ to deliver but does not send any the $\mathsf{reveal}$ message to open the commitment, the simulator $\huaS$ can simulate that since it can let the corrupted worker to send an $\mathsf{answer}$ message containing $\bot$ with $\huaF_{hit}$.
	In addition, it is trivial to see $\huaS$ can internally simulate the parties as well as $\Contract_{hit}$, when the adversary $\huaA$ corrupts a worker to submit duplicated commitment.
	
	%

	\item \textbf{Evaluate Answers (Phase 3)}. The simulation becomes clear, if considering the security requirements of commitment scheme, $\VPKE$, and $\PoQoEA$. 
	If  the requester $\huaR\in\mathbb{C}$, the simulator $\huaS$ invokes $\huaA$ to obtain all $\mathsf{outrange}$ and/or $\mathsf{evaluate}$ messages sent to $\Contract_{hit}$, and then simulates the interactions.
	If  the requester $\huaR\in\mathbb{H}$, whenever $\huaR$ sends $\mathsf{outrange}$ and/or $\mathsf{evaluate}$ messages to $\huaF_{hit}$, $\huaS$ is informed and hence is allowed to simulate the interactions between $\Contract_{hit}$ and $\huaR$ in the real world.
\end{itemize}

\ignore{
	
\noindent{\bf Correctness.} It is clear to see that the requester will obtain crowd-shared data and the workers will receive the right amount of payments, according to the pre-specified incentive mechanism based on gold-standards. If they all follow the protocol, under the conditions that (i) the blockchain can be modeled as an ideal public ledger, (ii) the underlying NIZK proofs based on $\Sigma$-protocol and Fiat-Shamir transform is complete, (iii) the  encryption is correct.

\medskip
\noindent \textbf{Security analysis (sketch).} 
Here we briefly discuss the security of \system{ }protocol. The underlying primitives including cryptographic commitment, public key encryption, digital signature, NIZK proofs and blockchain (smart contract) are well abstracted, which allows us to argue in a modular way.

\smallskip
Regarding the {\em data confidentiality} of the answers submitted to non-gold-standard questions, all related public transcripts are simply the ciphertexts, and the NIZK proofs leak nothing about the decryption key. The ciphertexts are easily simulatable according to the semantic security of the public key encryption, and the proofs can also be simulated without seeing the secret witness (i.e. private decryption key) because of the {\em zero-knowledge} property.

\smallskip
Regarding the {\em security against a malicious requester}, a malicious requester has three chances to gain advantage: (i) deny or tamper the policy (including the gold-standards) committed in $\mathsf {Publish}$ phase; (ii) cheat in $\mathsf {Reward}$ phase by lying on what the workers submit to gold-standard questions.
The first threat is prevented because the smart contract is immutable, and the requester cannot deny it once it is posted in the blockchain; moreover, the cryptographic commitment is binding. The second threat is prohibited by the soundness of the underlying NIZK proof, since any incorrect disclosure of workers' answers to the gold-standards, will directly violate the proof-of-knowledge of NIZK proof.

\smallskip
{\em Security against malicious workers} is straightforward, the only ways that malicious workers can cheat are: (i) predicting others' submissions in $\mathsf{SubmissionsCommit}$ phase, and copying the predicated in $\mathsf{SubmissionsCollection}$ phase; (ii) altering the policy specified in the contract; (iii) in the $\mathsf{SubmissionsCommit}$ phase, copying  another' ciphertext hash, and later copying this worker's ciphertext in the $\mathsf{SubmissionsCommit}$ phase; (iv) submitting twice in the same task. The first threat can be approached by predicting others' answers through their cryptographic hashes, which is prevented due to the preimage resistance of cryptographic hash function. The second issue is trivial, because the blockchain security ensures the announced policy is immutable. The third issue is prevented, because the contract drops all the same hashes of ciphertexts. The last problem is prevented by the proper authentication scheme.

\smallskip
{\em Gold-standards audibility} is trivial, since the cryptographic commitment scheme satisfies binding, and the blockchain is an immutable ledger to record the commitment. If the requester would like to use junk gold-standards to avoid payments, she has to  reveal the junk gold-standards to the public by opening the commitments  uniquely.

\smallskip
To summarize the above security analyses, we can conclude:

\begin{theorem}
\label{thm:basic}
Our protocol satisfies the answers confidentiality, the special fairness, and the gold-standard audibility, if all the underlying cryptographic primitives and the blockchain platform satisfy the corresponding securities.
\end{theorem}


\medskip
\noindent \textbf{Efficiency intuition.} The efficiency of the protocol is rather intuitive. On the requester end, she only has to manage one public-private key pair through all crowdsourcing tasks; the generation of NIZK proofs only spends few public key operations, which is essentially nothing and especially get rids of the heavy cost of generating zk-SNARK proofs.
On the worker end, the cost corresponds only some public key encryptions. 
For the critical on-chain costs, the potentially costly NIZK proof verification turns out to be cheap, because it only costs some handful key operations, even in the worst case. More non-trivial on-chain optimizations can be further considered, and we will discuss the details in the next section, where
the concrete performance of our protocol is also provided.

}


\section{\system: Implementation \& Evaluation}\label{implementation}

To demonstrate the feasibility of our   protocol, we implement it to build {\system}, and then use the system to launch a typical image annotation task for ImangeNet \cite{imagenet-details,RL10} atop Ethereum.

 \smallskip
\noindent{\bf System overview.}
\system{} consists of an on-chain part and an off-chain part: the on-chain  smart contract is deployed in Ethereum ropsten network; the requester client and worker clients are implemented in Python 3.6. The off-chain clients are installed in a PC that uses Ubuntu 14.04 LTS and equips Intel Xeon E3-1220V2 CPU and 16 GB main memory.

 \smallskip
 \underline{\smash{\emph{ImageNet's HIT task}}}.
We demonstrate our system through an ImageNet task \cite{imagenet-details,RL10}, which is specified as: each task is made of 106 binary questions, 100 out of which are non-gold-standard questions, while the remaining 6 questions are requester's gold-standard challenges; 4 workers are allowed to participate; if a worker cannot correctly answer at least four golden standard questions, his submission will be rejected without being paid, otherwise he deserves to get the payment.

 \smallskip
 \underline{\smash{\emph{Cryptographic modules}}}.
The hash function  is instantiated by keccak256. We choose  the cyclic group $\huaG$ by   using the $\huaG_1$ subgroup of BN-128 elliptic curve, over which all   concrete public key primitives are instantiated. 

 \smallskip
 \underline{\smash{\emph{Code availability}}}.
The code of our prototype is available at \url{https://github.com/njit-bc/dragoon}.
An experiment instance is  atop Ethereum ropsten network   (\url{https://ropsten.etherscan.io/address/0x5481b096c78c8e09c1bfbf694e934637f7d66698}).

\begin{figure}[!htp]
	\vspace{-4mm}
	\centering
	\includegraphics[width=7cm]{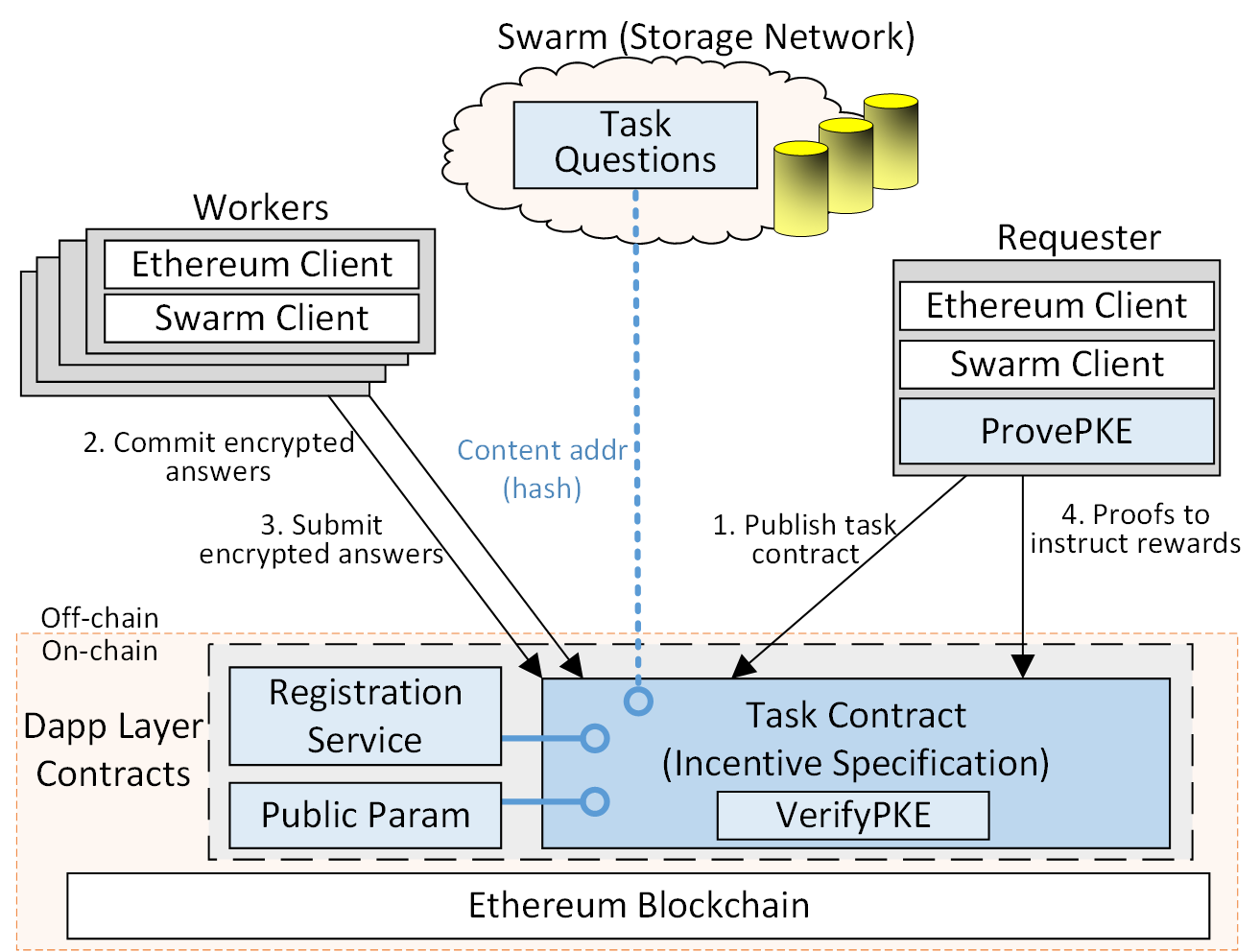}
	\caption{The schematic diagram of \system{} at a high-level.  
	}
	\label{fig:system}
	\vspace{-2mm}
\end{figure}

 \smallskip
\noindent{\bf Implementation details.}
Many non-trivial on- and off-chain optimizations are particularly made  for practicability.

 \smallskip
 \underline{\smash{\emph{Off-chain ends}}}.
The requester end   warps: (i) an Ethereum node to interact with the blockchain, e.g. publish task, download workers' submissions, etc; (ii) the prover of verifiable encryption to generate necessary proofs to instruct the contract   to reward workers;
(iii) a Swarm API to publish the detailed questions of each crowdsourcing task.   Swarm \cite{Swarm} is an off-chain storage network, where the questions of HIT is stored; in addition, to ensure integrity of HIT questions, the  digest of the questions is committed in the contract, which significantly reduces on-chain cost, without violating   securities. 

The worker client wraps Ethereum to     interact with the blockchain to read task and submit answers, and also incorporates Swarm client to allow download task questions.

 \smallskip
\underline{\smash{\emph{On-chain optimizations}}}.
We  cautiously perform a few non-trivial system-level optimizations to lighten the  task contract: (i) we implement all public key schemes over   $\huaG_1$ subgroup of BN-128 \cite{bncurve}, since we can use some precompiled contracts in Ethereum to do algebraic operations there cheaply \cite{EIP1108}; (ii)    it is expensive to   store   ciphertexts in   the contract as internal variables, so we make the contract   store their 256-bit hashes instead and let the actual ciphertexts included in the chain as emitted event logs \cite{Woo14}.

 \smallskip
\noindent{\bf Evaluations.}
We conduct intensive experiments to measure the concrete performance, and discuss the system feasibilities from the on-chain side and the off-chain side.

 \smallskip
 \underline{\smash{\emph{Off-chain costs}}}.
First, \system{} enables the requester to manage only one private-public key pair throughout all her tasks,  because all protocol scripts are simulatable without secret key and therefore leak nothing relevant. 
%
%
More importantly, the off-chain   cost of proving relevant cryptographic proofs   is significantly reduced by removing unnecessary generality.

\begin{table}[!htbp]
	\vspace{-3mm}
	\scriptsize
 
	\centering
	\caption{ Off-chain {\bf Proving Cost} of $\VPKE$ and $\PoQoEA$ due to our concrete constructions and generic zk-proofs respectively.}
		\vspace{-2mm}
	\label{table:off-chain}
				\renewcommand{\arraystretch}{1.2}
	\begin{tabular}{c||cm{0.8cm}m{1.4cm}m{1.1cm}}
		\hline
		& Statement to Prove & Time & Peak Memory		
		\\ \hline\hline
		\multirow{2}{*}{Ours} & $\VPKE$  & 3 ms      &  53 MB           
		\\
								  & $\PoQoEA$  & 10 ms      &  53 MB            
								  \\ \hline
		\multirow{2}{*}{Generic ZKP}{$*$}   & $\VPKE$  & 37 s         & 3.9 GB             
		\\
								  &	$\PoQoEA$  & 112 s        & 10.3 GB             
								  \\ \hline
	\end{tabular}

{{\ }\\ * Through our evaluations, generic zk-proofs are instantiated by zk-SNARK, which is the only generic zk-proof feasibly supported by existing blockchains to our knowledge.}
		 \vspace{-1mm}
\end{table}

Table \ref{table:off-chain} clarifies   the requester suffers from hindersome off-chain burden of generating generic zk-proofs.
The concrete construction  removes the bottleneck of proving in generic zk-proof. 
First, the requester can generate a proof to reject a worker's submission within only a few milliseconds, 
which costs nearly 2 minutes if using generic zk-proof. Second, the concretely efficient constructions also save in memory usage. For example, by generic zk-proof, rejecting a submission requires a peak memory usage of 10 GB, which is reduced to only 53 MB by concrete constructions.

 \smallskip
 \underline{\smash{\emph{On-chain costs}}}.
We  measure the critical on-chain performance from many angles including the cost of verifying zk-proofs and the on-chain gas usage of the whole protocol.

First, we compare the verifying cost of concrete and generic constructions for $\VPKE$ and $\PoQoEA$ (six golden standards) in Table \ref{table:on-chain-verif}. The     concrete proofs  are faster, even compared to the generic zk-proof (SNARK)   known for   efficient verification. 

%
%
%

 \newcommand{\minitab}[2][l]{\begin{tabular}{#1}#2\end{tabular}} 
\begin{table}[!htbp]
	\vspace{-3mm}
	\scriptsize
	
	\centering
	\caption{On-chain {\bf Verification Cost} of $\VPKE$ and $\PoQoEA$ due to our concrete constructions and generic zk-proofs, respectively.}
	\vspace{-2mm}
	\label{table:on-chain-verif}
	\renewcommand{\arraystretch}{1.2}
	\begin{tabular}{c ||cm{1.6cm}}
		\hline
		& Statement to Verify                                      & Verifying Time    \\ \hline\hline
		\multirow{2}*{{Ours}}$ \dagger$  & $\VPKE$     & {\ \ \ \ }1 ms       \\
		& $\PoQoEA$                			& {\ \ \ \ }2 ms       \\ \hline
		\multirow{2}*{Generic ZKP}$ \ddagger$   &$\VPKE$ & {\ \ \ \ }11 ms  	      \\
		& $\PoQoEA$                 		& {\ \ \ \ }17 ms          \\ \hline
	\end{tabular}

	{{\ }
	\\ {$\dagger$} Here the   implementation of BN-128 curve is  from   \url{https://github.com/scipr-lab/libff}.
	\\ {$\ddagger$} The evaluations for generic ZKP (SNARK) are performed due to constructions from 2048-bit RSA-OAEP over $\mathbb{Z}_{pq}$ instead of ElGamal over the $\huaG_1$ subgroup of BN-128.}
	\vspace{-1mm} 
\end{table}

Moreover, the overall handling fee  of running a concrete ImageNet instance is summarized in Table \ref{table:on-chain}. 
To estimate the cost of on-chain usages, we apply a gas price at $1.5\times10^{-9}$ Ether per gas, and an Ether price at $115$ USD per Ether, which are the safe-low price of gas  \cite{ethgasstation} and the market  price of Ether on March/17th/2020,  respectively.
Under the above exchange rate, the on-chain handling fee paid by  each worker is about \$0.48, which is used to submit an answer. 
In addition, thanks to the efficient verification of $\PoQoEA$, the requester can   spend {\em few cents}  to reject each low-quality answer.
%
The overall on-chain handling cost of the entire HIT is about two US dollars. 
In contrast, when  MTurk facilitates the same ImageNet task, it charges a handling fee at least \$4 currently \cite{mturkprice,mturkfee}.

\begin{table}[!htbp]
	\vspace{-3mm}
	\scriptsize
 
	\centering
	\caption{On-chain {\bf Overall Handling Fees} of a Concrete ImageNet Task   \protect\\ (Task policy: 4 workers; 106 questions; 6 gold-standards; a submission is rejected if failing in 3 gold-standards)}
	\label{table:on-chain}
		\vspace{-2mm}
			\renewcommand{\arraystretch}{1.2}
	\begin{tabular}{m{4.25cm}||m{1.2cm}||m{0.79cm}}
		\hline
		Handling fee of & Gas Usage &  In USD\\
		\hline\hline
		Publish task (by requester) 				 &  $\sim$1293 k      	&  \$0.22    	\\
		Submit answers (by worker) 					 &  $\sim$2830 k    	&  \$0.48        	\\
		Verify $\PoQoEA$ to reject an answer    	 &  $\sim$180 k   		&  \$0.03		\\    		
		\hline\hline
		Overall (best-case: reject no submission)    &  $\sim$12164 k       &  \$2.09      	\\
		Overall (worst-case: reject all submissions) &  $\sim$12877 k       &  \$2.22       	\\
		\hline
	\end{tabular}

 \vspace{-1mm} 
\end{table}

To summarize, \system{} is practical. 
Our experiment even reveals that \system{}'s on-chain handling cost can  be economically cheaper than the the handling fee charged by third-party platforms such as MTurk. 
In addition,    \system{}  is compatible with many alternative  chains (e.g.,  Cardano, Ethereum Classic  and more) other than Ethereum, 
as long as the   blockchains are using Ethereum Virtual Machine (EVM) as the running environment of smart contracts.
So our system can be deployed in these alternative chains to further reduce the   handling cost. 





\section{Conclusion}\label{sec6}
We design a  decentralized protocol to   crowdsource human knowledge, which, to   our knowledge, is  the first system in its kind that realizes both rigorous security and high efficiency.

  \smallskip
\noindent{\bf Open problems.} 
It hints that the special-purpose protocols are promising to
decentralize  various  crowdsourcing  with high-security assurance as well as efficiency.
It immediately corresponds to a few realistic problems  to explore.
For example, can we design a concretely efficient protocol
to decentralize participatory crowd-sensing that is  minimally meaningful with the needed fairness and privacy?
Such the problem is   challenging, 
since there seems no explicit requester to ``prove'' the quality of encrypted data anymore.
Unfortunately, letting the blockchain   learn encrypted data's quality (without a prover)   falls into the category of (multi-input) functional encryption,
which is   unclear how can be solved practically till today. 

Another fundamental problem is that we consider  security  due to   conventional cryptographic notions, 
where   corrupted parties are fully controlled by an adversary and   honest parties    follow  the protocol independently.
The model has an inherent drawback to explain why  rational workers would not deviate (e.g., by colluding). To resolve the concern, an ``incentive-compatible'' protocol is required, so ``following the protocol'' is a   Nash equilibrium (or its refinement) that can deter rational workers from deviating.





\section*{Acknowledgment}
Qiang is supported in part by JD.com and a Google Faculty Award. We would like to thank the anonymous reviewers for their valuable suggestions and  comments about this paper.

\bibliographystyle{IEEEtran}
{\footnotesize
\bibliography{IEEEabrv,reference}
}

\end{document}